\newtheorem{theorem}{Theorem}[section]
\newtheorem{lemma}[theorem]{Lemma}
\newcommand\DELETE[1]{}
\begin{document}

\title{{\bf On local structures of cubicity 2 graphs}}
\author{
{\sc Sujoy Kumar Bhore$^{(a)}$}, {\sc Dibyayan Chakraborty$^{(b)}$}, {\sc Sandip Das$^{(b)}$}, \\{\sc Sagnik Sen$^{(c)}$}\\
\mbox{}\\
{\small $(a)$ Ben-Gurion University, Beer-Sheva, Israel}\\
{\small $(b)$ Indian Statistical Institute, Kolkata, India}\\
{\small $(c)$ Indian Statistical Institute, Bangalore, India}
}

\date{}

\maketitle

\maketitle
\begin{abstract}
 A 2-stab unit interval graph (2SUIG)  
is an axes-parallel unit square intersection graph where
the unit squares  intersect either of the two fixed lines parallel to the $X$-axis,
distance $1 + \epsilon$ ($0 < \epsilon < 1$) apart. This family of graphs allow us to study local structures of  unit square intersection graphs, that is, graphs with cubicity 2.  
The complexity of determining whether a tree has cubicity 2 is unknown while the graph recognition problem 
for unit square intersection graph is known to be NP-hard.  
We present a polynomial time algorithm for recognizing  trees that admit a 2SUIG representation.
 \end{abstract}

\noindent \textbf{Keywords:}  cubicity, geometric intersection graph, unit square intersection graph, 2-stab unit interval graph.

\section{Introduction}
Cubicity $cub(G)$ of a graph $G$ is the minimum  $d$ such that $G$ is representable as a geometric intersection graph of 
$d$-dimensional (axes-parallel) cubes~\cite{roberts}. The notion of cubicity is a special case of boxicity~\cite{roberts}. 
Boxicity $box(G)$ of a graph $G$ is the minimum  $d$ such that $G$ is representable as a geometric intersection graph of 
$d$-dimensional  (axes-parallel) hyper-rectangles. 
Given a graph $G$ it is NP-hard to decide if $box(G) \leq n$~\cite{krato} and  $cub(G) \leq n$~\cite{breu} for all  $n \geq 2$. 
On the other hand, the family of graphs with boxicity 1 and the family of graphs with cubicity 1 are just the families of interval and unit interval graphs, respectively. The graph recognition  problems  for both these families are solvable in polynomial time.

Trees with boxicity 1 are the caterpiller graphs while all trees have boxicity 2. 
 On the contrary, determining cubicity of a tree  seems to be a more difficult problem. It is easy to note that trees with cubicity 1 are paths. 
For higher dimensions, Babu et al.~\cite{Chandran2014} presented a randomized algorithm that 
runs in polynomial time and computes cube representations of trees, 
of dimension within a constant factor of the optimum. The complexity of determining the cubicity of a tree is unknown~\cite{Chandran2014}.

In a recent work~\cite{Bhore2015}, some new families of graphs, 
based on the local structure of boxicity 2 and cubicity 2 graphs were introduced and studied. 
 A \textit{2-stab unit interval graph (2SUIG)}  
is an axes-parallel unit square intersection graph where
the unit squares  intersect either of the two fixed lines, called upper and lower stab lines, parallel to the $X$-axis,
distance $1 + \epsilon$ ($0 < \epsilon < 1$) apart (see Fig.~\ref{2SUIG_example} for example). 
For convenience, 
let $y = 1$ be the \textit{lower stab line}  and 
$y = 2+ \epsilon $ be the \textit{upper stab line} where $ \epsilon \in (0,1)$  is a constant, for the rest of the article. 
 The family of such graphs are called the \textit{2SUIG} family, introduced~\cite{Bhore2015} for studying the local structures of cubicity 2 graphs.
 A geometric representation of the above mentioned type of a graph  is called a \textit{2SUIG representation} of the graph. 
 Given a 2SUIG representation of a graph $G$, the vertices corresponding to the unit squares intersecting the upper stab line are called \textit{upper} vertices and the vertices corresponding to the unit squares intersecting the lower stab line are called \textit{lower} vertices.
 If a set of vertices are all lower (or upper) vertices then we say they are in the \textit{same stab}.

In this article, we characterize all   trees that admit a 2SUIG representation using  forbidden structures. In particular,  we prove the following:

\begin{theorem}\label{main_th_tree}
Determining whether a given tree $T = (V, E)$ is a 2SUIG can be done in $O(|V|)$ time. 
\end{theorem}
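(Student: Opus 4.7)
The plan is to establish a finite list $\mathcal{F}$ of forbidden local configurations (each a subtree of bounded size) whose absence in $T$ is equivalent to $T$ being a 2SUIG, and then to test for these by a single traversal.

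First, I would exploit the acyclicity of $T$ to derive sharp local constraints on any putative 2SUIG representation. For any vertex $v$, its neighbors in $T$ are pairwise non-adjacent, so their unit squares must be pairwise disjoint while each meeting the square of $v$. Separating these neighbors according to whether they lie on the same stab as $v$ or on the opposite stab, one obtains on each side a collection of pairwise disjoint unit $x$-intervals each meeting the unit $x$-interval of $v$. A direct interval-packing argument then bounds each side by $2$, so $\deg_T(v) \le 4$ is necessary; in particular, any tree containing a vertex of degree at least $5$ is immediately excluded.

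Second, I would refine this local analysis into a finite catalogue of further obstructions. Two adjacent high-degree vertices $v$ and $w$ impose conflicting demands on the $x$-coordinates of their respective intervals, and configurations involving a degree-$3$ or degree-$4$ vertex together with its surrounding subtree can be ruled out by similar geometric arguments on which endpoints of $v$'s interval each neighbor can cover. Enumerating these local conflicts yields the forbidden family $\mathcal{F}$ and delivers the easy ``only if'' direction of the characterization.

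For the ``if'' direction I would root $T$ at an arbitrary vertex and build a 2SUIG representation by bottom-up induction. At each internal vertex $v$, the representations of the subtrees rooted at the children of $v$ have already been built, and the task is to merge them into a representation of $T_v$. A finite case analysis on the stab assigned to $v$, the stabs assigned to its children, and the relative positions those children occupy should show that, as long as no pattern of $\mathcal{F}$ appears locally at $v$, a valid extension is available. Since every decision is governed by the $O(1)$-neighborhood around $v$, both the forbidden-pattern check and the representation construction run in $O(|V|)$ time by a single depth-first traversal.

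The main obstacle will be the ``if'' direction: namely, showing that the bounded-size list $\mathcal{F}$ actually captures \emph{every} obstruction, so that any failure to construct a representation can be localized to a single forbidden witness. The subtlety is that the stab-and-position choice at an internal vertex is constrained by its whole child subtree, and one must argue that any global inconsistency necessarily forces a local conflict already present in $\mathcal{F}$. I expect the bulk of the technical work, and the most intricate case analysis, to lie in this propagation-to-local-witness argument; the degree bound, the enumeration of $\mathcal{F}$, and the linear-time implementation should then follow in a relatively routine manner.
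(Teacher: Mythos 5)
There is a genuine gap, and it lies exactly where you predicted: the premise that 2SUIG trees are characterized by a \emph{finite} list $\mathcal{F}$ of forbidden subtrees of \emph{bounded size} is unsupported and almost certainly false. The paper's analysis shows that the real obstructions are not bounded-size local patterns but arithmetic conditions along a global spine: it identifies the ``red edges'' (edges $e$ such that both components of $T\setminus\{e\}$ contain a claw --- already a non-local property), proves they must induce a path, and then shows (Lemma~\ref{lem 4options}) that two branch vertices at distance $m$ on this path can only accommodate pendant paths (``tails'') pointing toward each other if roughly $\alpha(P_v)+\alpha(P'_{v'})\le m+O(1)$. For every $m$ this yields a minimal non-2SUIG tree of size growing with $m$, so no bounded-size catalogue can capture all obstructions, and your ``only if'' direction cannot be completed as stated. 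Your degree bound $\deg_T(v)\le 4$ does match the paper's Lemma~\ref{obs_deg_4}, but everything after that diverges.

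The second gap is in your ``if'' direction. A bottom-up merge over a rooted tree requires summarizing each child subtree by a bounded amount of information, yet the relevant quantity --- how far the subtree's representation must extend horizontally, and on which stab it forces its attachment point --- is an unbounded integer, not an $O(1)$-neighborhood datum. Your finite case analysis ``on the stab assigned to $v$, the stabs assigned to its children, and the relative positions those children occupy'' therefore does not close. The paper avoids this by first proving a global structure theorem (every branch vertex is on or adjacent to the red path, and everything hanging off it is a bare path), which collapses the problem to a one-dimensional left-to-right scan along the extended red path; at each red vertex only $O(1)$ placements of its agents and tails are possible, and a greedy choice minimizing the rightmost extent is shown to be safe. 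That global decomposition is the key missing idea in your proposal: without it, neither the forbidden-witness localization nor the linear-time construction goes through.
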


 If a tree is a 2SUIG, then
our algorithm can be used to find a 2SUIG representation of it. Our algorithm finds a forbidden structure responsible for the tree
not having a 2SUIG representation.  
In particular, 2SUIG is a graph family with cubicity (and boxicity) 2 which contains the family of unit interval graphs as a subfamily. 
Moreover, the family of 2SUIG graphs is not perfect~\cite{golumbic} as 5-cycle has a 2SUIG representation (see Fig.~\ref{2SUIG_example}). 
So our work, to the best of our knowledge,  is
the first non-trivial work on recognizing subclass of trees with cubicity  2.

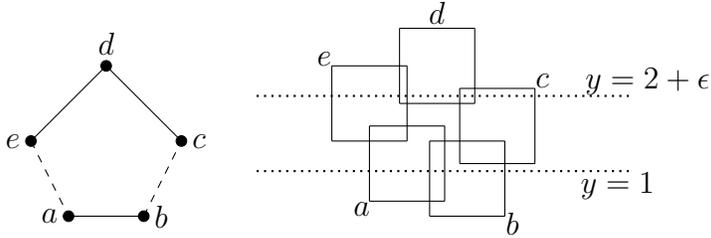
\begin{figure}

\centering
\begin{tikzpicture}

\filldraw [black] (0,0) circle (2pt) {node[left]{$a$}};
\filldraw [black] (1,0) circle (2pt) {node[right]{$b$}};
\filldraw [black] (1.5,1) circle (2pt) {node[right]{$c$}};
\filldraw [black] (-.5,1) circle (2pt) {node[left]{$e$}};
\filldraw [black] (.5,2) circle (2pt) {node[above]{$d$}};

\draw[-] (0,0) -- (1,0);
\draw[-] (-.5,1) -- (.5,2);
\draw[-] (1.5,1) -- (.5,2);

\draw[dashed] (-.5,1) -- (0,0);
\draw[dashed] (1,0) -- (1.5,1);


\draw[-] (4,.2) -- (5,.2);
\draw[-] (4,1.2) -- (5,1.2);
\draw[-] (4,.2) -- (4,1.2);
\draw[-] (5,.2) -- (5,1.2);

\node at (3.9,.1) {$a$};

\draw[-] (4.8,0) -- (5.8,0);
\draw[-] (4.8,1) -- (5.8,1);
\draw[-] (4.8,0) -- (4.8,1);
\draw[-] (5.8,0) -- (5.8,1);

\node at (5.9,-.1) {$b$};

\draw[-] (3.5,1) -- (4.5,1);
\draw[-] (3.5,2) -- (4.5,2);
\draw[-] (3.5,1) -- (3.5,2);
\draw[-] (4.5,1) -- (4.5,2);

\node at (3.4,2.1) {$e$};

\draw[-] (4.4,1.5) -- (5.4,1.5);
\draw[-] (4.4,2.5) -- (5.4,2.5);
\draw[-] (4.4,1.5) -- (4.4,2.5);
\draw[-] (5.4,1.5) -- (5.4,2.5);

\node at (4.9,2.7) {$d$};

\draw[-] (5.2,.7) -- (6.2,.7);
\draw[-] (5.2,1.7) -- (6.2,1.7);
\draw[-] (5.2,.7) -- (5.2,1.7);
\draw[-] (6.2,.7) -- (6.2,1.7);

\node at (6.3,1.8) {$c$};


\draw[thick,dotted] (2.5,1.6) -- (7.5,1.6);

\node at (7.7,1.8) {$y = 2 + \epsilon $};

\draw[thick,dotted] (2.5,.6) -- (7.5,.6);

\node at (7.3,.4) {$y = 1 $};

\end{tikzpicture}

\caption{A representation (right) of a $2SUIG$ graph (left).}\label{2SUIG_example}

\end{figure}

\section{Preliminaries}
To prove our results we will need several standard and non-standard definitions which we will present in this section.

Let $G$ be a graph. The set of vertices and edges are denoted by $V(G)$ and $E(G)$, respectively. A vertex subset $I$ of $G$ is an \textit{independent set} if all the vertices of $I$ are pairwise non-adjacent. The cardinality of the largest independent set of $G$ is its \textit{independence number}, denoted by $\alpha(G)$.

Let $G$ be a unit square intersection  graph with a fixed representation $R$. We denote the unit square in $R$ corresponding to the vertex $v$ of $G$ by  $s_v$. In this article, by a unit square we will always mean a closed unit square. 
The co-ordinates of the left lower-corner of $s_u$ is denoted by $(x_u,y_u)$.
Given a graph $G$ with a 2SUIG representation $R$ and two vertices $u,v \in V(G)$ we say   $s_u <_x s_v$  if $x_u < x_v$ and 
$s_u <_y s_v$  if $y_u < y_v$ (see figure~\ref{fig_order}).
Let $H$ be a connected subgraph of $G$. Consider the union of intervals obtained from the projection of 
unit squares corresponding to the vertices of $H$ on $x-$axis. 
This so obtained interval $span(H)$ is called the \textit{span} of $H$ in $R$.

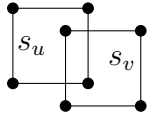
\begin{figure}
\centering
\begin{tikzpicture}

\filldraw [black] (0,0) circle (2pt) {node[left]{}};
\filldraw [black] (1,0) circle (2pt) {node[right]{}};
\filldraw [black] (1,1) circle (2pt) {node[left]{}};
\filldraw [black] (0,1) circle (2pt) {node[right]{}};

\draw[-] (0,0) -- (1,0) -- (1,1) -- (0,1) -- (0,0);

\filldraw [black] (.7,-.3) circle (2pt) {node[left]{}};
\filldraw [black] (1.7,-.3) circle (2pt) {node[right]{}};
\filldraw [black] (1.7,.7) circle (2pt) {node[left]{}};
\filldraw [black] (.7,.7) circle (2pt) {node[right]{}};

\draw[-] (.7,-.3) -- (1.7,-.3) -- (1.7,.7) -- (.7,.7) -- (.7,-.3);

\node at (.25,.5) {$s_u$};

\node at (1.45, .3) {$s_v$};

\end{tikzpicture}
\caption{In the above picture $s_u <_x s_v$ and $s_u <_y s_v$.}
\label{fig_order}
\end{figure}

A \textit{leaf} is a vertex with degree 1. 
A \textit{caterpillar} is a tree where every leaf vertex is adjacent to a vertex of a fixed path.
A \textit{branch vertex} is a vertex having degree more than 2.
A \textit{branch edge} is an edge incident to a branch vertex.
A \textit{claw} is  the 
complete bipartite graph $K_{1,3}$.
Given a 2SUIG representation $R$ of a graph $G$, an edge $uv$ is a \textit{bridge edge} if $s_u$ and $s_v$ intersect different stab lines.

Let $P = v_1v_2...v_k$ be a path with a 2SUIG representation $R$. 
The path $P$ is a \textit{monotone path} if either $s_{v_1} <_x s_{v_2} <_x ... <_x s_{v_k}$ or $s_{v_k} <_x s_{v_{k-1}} <_x ... <_x s_{v_1}$.
Let $P = v_1v_2...v_k$ be a monotone path with $s_{v_1} <_x s_{v_2} <_x ... <_x s_{v_k}$
 and all $v_i$'s are in the same stab. Observe that 
 $ \alpha(P) = \lceil \frac{k}{2} \rceil < span(P) \leq k$ are tight bounds for $span(P)$. 
 Fix some constant $c \in (0, .5)$ for the rest of this article. 
A monotone representation of $P$ is \textit{stretched} if $span(P) = k$ and is \textit{shrinked} if $span(P) = \lceil \frac{k}{2} \rceil + c$ (see Fig.~\ref{fig shrinked-streached}). 
 The value of $c$ will not affect our proof. 
If all the vertices of $P$ are in the same stab, then $P$ must be monotone.
 So, a path like $P$  can have four different such 2SUIG representations:

\begin{figure}
\center
\includegraphics[scale=.5]{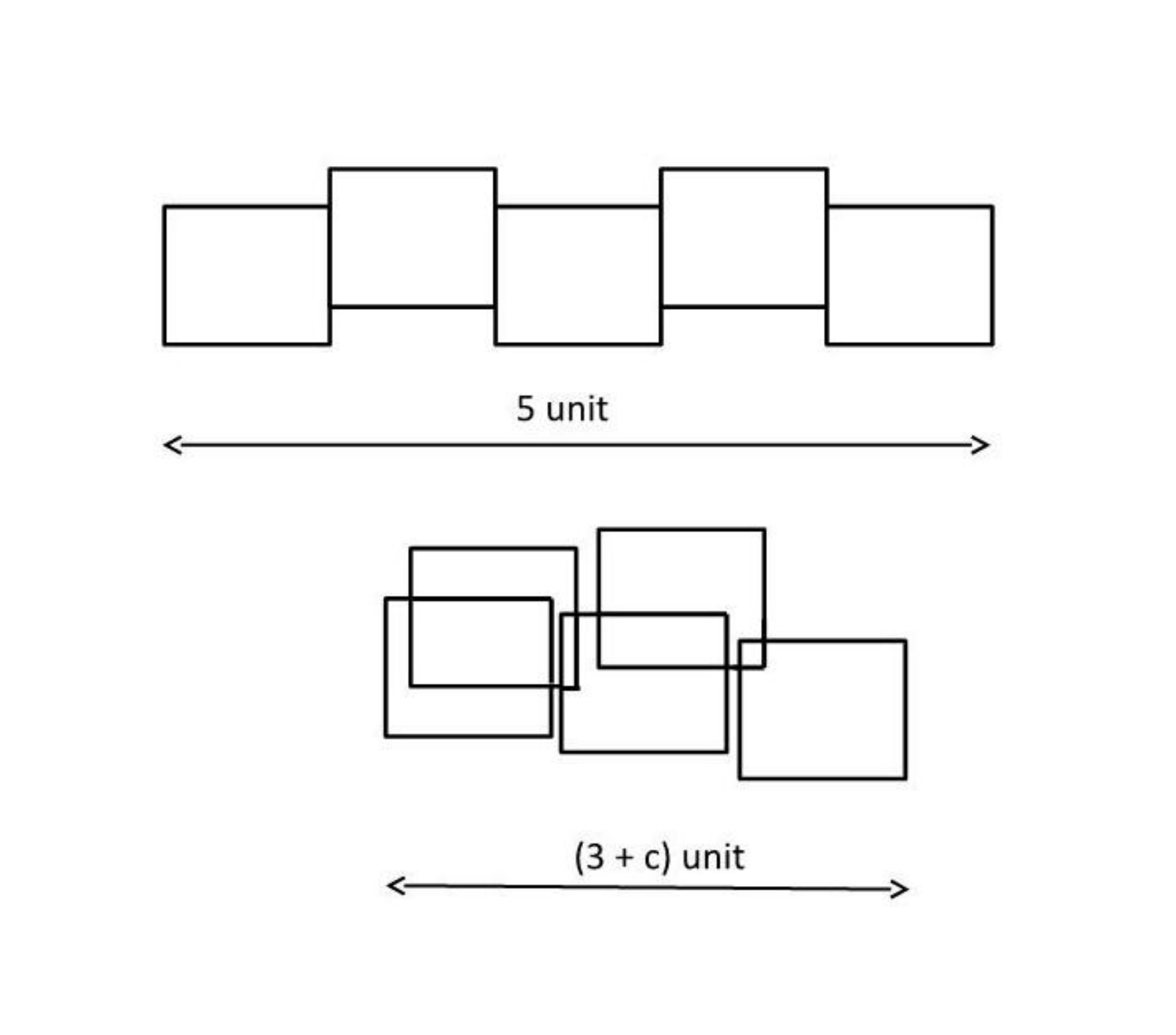}
\caption{A streached (above) and shrinked (below) representation of a path on five vertices.}
\label{fig shrinked-streached}
\end{figure}

 \begin{itemize}
 \item[1.] \textit{lower-right monotone}: all the vertices of the path are lower vertices with $s_{v_1} <_x s_{v_2} <_x ... <_x s_{v_k}$;
 
 \item[2.] \textit{upper-right monotone}: all the vertices of the path are upper vertices with $s_{v_1} <_x s_{v_2} <_x ... <_x s_{v_k}$;
 
 \item[3.]  \textit{lower-left monotone}: all the vertices of the path are lower vertices with $s_{v_k} <_x s_{v_{k-1}} <_x ... <_x s_{v_1}$; 
 
 \item[4.] \textit{upper-left monotone}: all the vertices of the path are upper vertices with $s_{v_k} <_x s_{v_{k-1}} <_x ... <_x s_{v_1}$.
 \end{itemize}

The first two types of monotone paths are called \textit{right monotone} and last two types are called \textit{left monotone}.

A path $P = v_1v_2...v_k$ is called a \textit{folded path} if it has a 
degree two vertex $u$ such that either 
$s_u <_x s_v$ for all $v \in V(P) \setminus \{u\}$  or $s_v <_x s_u$ for all $v \in V(P) \setminus \{u\}$. 

A \textit{red edge} of a tree $T$ is an edge $e$ 
 such that  each  component of $T \setminus \{e\}$ contains a claw.
A \textit{red path} is a path induced by red edges. 
A \textit{maximal red path} is a red path that is not properly contained in another red path.
Let $P = v_1v_2...v_k$ be a  maximal red path  in $T$. The vertices $v_1$ and $v_k$ are \textit{endpoints} of $P$.

\section{Proof of Theorem~\ref{main_th_tree}}
Given a tree $T$, here our objective is to determine whether $T$ has a 2SUIG representation.
Let  $T = (V,E)$ be a fixed tree. 
We will now prove several necessary conditions for $T$ being a 2SUIG. 
On the other hand,  we will show that these conditions together are also sufficient and can be verified in $O(|V|)$ time.

\subsection{Structural properties}\label{subsec structure}
We will start by proving some structural properties of $T$ assuming it is a 2SUIG.

\begin{lemma}\label{obs_deg_4}
If $T$ has a 2SUIG representation, then its vertices have degree at most four.  
\end{lemma}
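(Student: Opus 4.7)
The plan is to exploit the tree structure. Since $T$ has no cycles, any two distinct neighbors $u_1, u_2$ of a vertex $v$ are themselves non-adjacent (otherwise $v, u_1, u_2$ would form a triangle). Hence in any 2SUIG representation $R$ of $T$, the unit squares corresponding to distinct neighbors of $v$ are pairwise non-intersecting. Because each such square meets either the lower stab $y=1$ or the upper stab $y = 2+\epsilon$, the neighborhood of $v$ splits into a set $L(v)$ of lower neighbors and a set $U(v)$ of upper neighbors; I would prove $|L(v)| \leq 2$ and $|U(v)| \leq 2$ separately, and then add.

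To bound $|L(v)|$, I would first observe that every square intersecting the lower stab has $y$-range of the form $[y_u, y_u+1]$ with $y_u \in [0,1]$, so it contains the value $1$. Consequently the $y$-projections of any two lower squares overlap, and the non-intersection condition then forces the $x$-projections of the squares in $\{s_u : u \in L(v)\}$ to be pairwise disjoint. On the other hand, each $u \in L(v)$ satisfies $s_u \cap s_v \neq \emptyset$, so in particular $[x_u, x_u+1]$ meets $[x_v, x_v+1]$, which gives $x_u \in [x_v - 1, x_v + 1]$ and hence $[x_u, x_u+1] \subseteq [x_v - 1, x_v + 2]$, a window of length $3$. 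A short packing computation (ordering the intervals $a_1 < a_2 < a_3$ and using $a_{i+1} > a_i + 1$) shows that at most two pairwise disjoint closed unit intervals fit inside a length-$3$ interval, so $|L(v)| \leq 2$. The identical argument applied at the upper stab, using that every upper square's $y$-range contains $2+\epsilon$, gives $|U(v)| \leq 2$, and summing yields $\deg(v) \leq 4$.

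I do not expect a serious obstacle, beyond making explicit the convention that the unit squares are closed and that non-adjacent vertices correspond to unit squares with strictly positive gap; this is what turns the packing inequality into a strict one and rules out three unit intervals inside a length-$3$ window. Everything else is elementary once the split into lower and upper neighbors is performed.
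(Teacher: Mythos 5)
Your proof is correct, but it takes a genuinely different route from the paper's. The paper's argument is a pigeonhole on the four corners of $s_v$: any unit square intersecting $s_v$ must contain one of its four corners (true for congruent axis-parallel squares), so five neighbors would force two squares to share a corner of $s_v$, hence to intersect each other, creating a triangle in the tree. Your argument instead exploits the 2-stab structure: you partition the neighbors of $v$ into lower and upper ones (legitimate, since the stab lines are $1+\epsilon>1$ apart so no unit square meets both), note that squares on a common stab line have overlapping $y$-projections, deduce from triangle-freeness that their $x$-projections are pairwise disjoint closed unit intervals packed into the length-$3$ window $[x_v-1,x_v+2]$, and conclude at most two per stab. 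Both proofs are sound and both use only the absence of triangles in $T$. What the paper's approach buys is generality: the corner argument makes no use of the stab lines at all, which is exactly why the authors can remark immediately afterwards that the lemma extends to trees that are arbitrary unit square intersection graphs; your packing argument is tied to the two-stab setting and would not give the bound $4$ without it. What your approach buys is that every step is an explicit one-dimensional computation, and the strictness you worry about at the end is already supplied by the paper's convention that squares are closed (disjointness of closed unit intervals is equivalent to a positive gap, so no extra hypothesis is needed).
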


\begin{proof}
Let $u$ be a vertex of $T$. Any unit square intersecting $s_u$ must contain  one of the four corners of $s_u$. 
Therefore, if $u$ has degree greater than four, then at least two  unit squares will contain the same corner of $s_u$ creating a 
3-cycle. 
\end{proof}

Note that the above lemma  can easily be extended for trees that are unit sqaure intersection graphs. 

\medskip

A tree $T$ that admits a 2SUIG representation does not necessarily have a red edge. 
But if $T$ has at least one red edge then the red edges of $T$ must induce a path.

\begin{lemma}\label{lm_RUIC}
If $T$ has a 2SUIG representation, then either $T$ has no red edge or the set of red edges of $T$  induces  a connected path.
\end{lemma}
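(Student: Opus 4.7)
The plan is to split the lemma into two claims: that the red edges form a connected subgraph of $T$, and that no vertex of $T$ is incident to more than two red edges. Together these imply that the red edges form a (connected) path.

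For connectivity, I would argue purely combinatorially, without using the 2SUIG representation. Suppose toward contradiction that two red edges $e_1, e_2$ lie in distinct connected components of the red-edge subgraph. Then the unique path in $T$ connecting $e_1$ and $e_2$ must contain some non-red edge $e_3$. Because $e_1$ is red, the component of $T \setminus \{e_1\}$ on the side of $e_1$ away from $e_2$ contains a claw $C_1$; by symmetry $e_2$ yields a claw $C_2$ on its far side from $e_1$. These two claws lie in different components of $T \setminus \{e_3\}$, so both components of $T \setminus \{e_3\}$ contain a claw, contradicting the hypothesis that $e_3$ is not red.

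For the no-branching claim, suppose toward contradiction that some vertex $w$ has three incident red edges $wa$, $wb$, $wc$, and let $T_a, T_b, T_c$ be the components of $T \setminus \{w\}$ containing $a, b, c$ respectively; by the definition of red edge, each $T_i$ must contain a claw. By Lemma~\ref{obs_deg_4}, the squares $s_a, s_b, s_c$ occupy three distinct corners of $s_w$, so the three subtrees are anchored at three different corners of $s_w$ in the 2SUIG representation. The plan is then to carry out a case analysis over which three of the four corners of $s_w$ are used, and in each case to trace where the forced claw of each $T_i$ can be placed. One then shows that two of the three subtrees compete for the same region near $s_w$, forcing at least one of their claws to create an unavoidable extra intersection---either with $s_w$ itself or with squares of a different subtree---contradicting the tree structure.

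The main obstacle will be the geometric case analysis in the no-branching step. The combinatorial connectivity argument is quick, but showing that three red edges cannot meet at a single vertex requires tracking how the $K_{1,3}$'s inside $T_a$, $T_b$, and $T_c$ fit around $s_w$. I expect this to rely on the same counting idea as in Lemma~\ref{obs_deg_4}: every claw requires three free corners of its center square for its three leaves, and near $s_w$ there is simply not enough corner-space across the two stab lines to accommodate three such substantial substructures simultaneously without forcing an intersection that would close a cycle in $T$.
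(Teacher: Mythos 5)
Your connectivity argument is exactly the paper's: the far-side claws of $e_1$ and $e_2$ land in the two different components of $T\setminus\{e_3\}$, forcing $e_3$ to be red. That half is complete and correct.

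The no-branching half, however, is left as a plan rather than a proof, and the mechanism you propose for closing it is not quite the right one. You suggest that ``near $s_w$ there is simply not enough corner-space'' to fit the three claws, echoing the local counting of Lemma~\ref{obs_deg_4}. But the claws guaranteed inside $T_a$, $T_b$, $T_c$ need not be anywhere near $s_w$ --- they can sit arbitrarily far out along their subtrees --- so no purely local analysis of the corners of $s_w$ can reach them. The fact you actually need (and which the paper uses) is global: a claw is not a unit interval graph, so \emph{any} 2SUIG representation of a claw must use squares on both stab lines; hence each of $T_a$, $T_b$, $T_c$ must itself touch both stab lines somewhere. Now two of the three subtrees are anchored at corners of $s_w$ on the same side, say the upper stab, one exiting to the left of $s_w$ and one to the right, and each must eventually send a path down to the lower stab line; the third subtree is anchored on the lower stab and must send a path up to the upper stab line. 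That third path is enclosed by $s_w$ together with the two descending paths, so reaching the upper stab forces it to intersect one of them --- an edge between distinct subtrees of $T\setminus\{w\}$, i.e.\ a cycle, contradiction. Your case analysis over which three corners are occupied is the right scaffolding, but without the ``every claw spans both stabs, so every subtree spans both stabs'' step and the resulting enclosure argument, the contradiction does not materialize; you should make that step explicit.
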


\begin{proof}
Let $T$ has at least one red edge and $T'$ be the graph induced by all red edges.
First we will show that $T'$ is connected. Thus, assume that $T'$ has at least two components $T_1$ and $T_2$. 
Then there is  a  path $P$ in $T$ connecting $T_1$ and $T_2$. 
Note that removing an edge $e$ of $P$ creates two components of   $T$ each of which contains a claw. 
 Thus, $e$ should be  a red edge. Therefore, $T'$ is connected.

Now we will show that $T'$ is  a path. 
Assume that $v$ is a vertex of $T'$ with degree at least 3. Also, let $v_1,v_2$ and $v_3$ be three neighbors of $v$ in $T'$. 
In any 2SUIG representation of $T$, at least three corners of $s_v$ must be intersected 
by $s_{v_1}, s_{v_2}$ and $s_{v_3}$. 
Without loss of generality, we assume that 
 $s_{v_1}$ intersects
the upper-left  corner of $s_v$, $s_{v_2}$ intersect the upper-right  corner of $s_v$, and $s_{v_3}$ intersects the left lower-corner of $s_v$.
This implies that $s_{v_1}, s_{v_2}$ intersect the upper stab line while $s_{v_3}$ intersects the lower stab line. 

Note that a claw has a 2SUIG representation. Any such representation of a claw will have squares intersecting the upper stab line and squares intersecting the lower stab line. 
As each component of $T \setminus \{vv_1\}$ has a claw, there must be a path of the form 
$v_1v_{11}v_{12}...v_{1k}$ in $T$ such that $s_{v_{1i}} <_x s_{v_1}$ for all $i \in \{1, 2, ..., k\}$ where $s_{v_{1k}}$  intersects the lower stab line. 
Similarly, as each component of $T \setminus \{vv_2\}$ has a claw, there must be a path of the form 
$v_2v_{21}v_{22}...v_{2k'}$ in $T$ such that $s_{v_2} <_x s_{v_{2i}}$ for all $i \in \{1, 2, ..., k'\}$ where $s_{v_{2k'}}$  intersects the lower stab line.
Moreover, as each component of $T \setminus \{vv_3\}$ has a claw, there must be a path of the form 
$v_3v_{31}v_{32}...v_{3k''}$ in $T$  where $s_{v_{3k''}}$  intersects the upper stab line. 
This will force a cycle in the representation of $T$, a contradiction.
Thus, $T'$ must be a path.  
\end{proof}

 The above result leads us to two cases:  when $T$ has a red path and when $T$ does not have any red path. 
Note that by Lemma~\ref{lm_RUIC}, if $T$ is a 2SUIG, then either $T$ has no red edge or the red edges induces a path  in $T$.  
If the red edges of $T$ induces a path $P$, then construct the \textit{extended red path} $A = a_1a_2...a_k$ by including the edge(s), that are not red,  incident to the endpoint(s) of $P$ that have degree two in $T$.
In particular, if both the end points of $P$ are branch vertices, the extended red path  $A = P$. 
On the other hand, if 
$T$ has no red edges, then  distance between any two branch vertices is at most 2.
Thus, there exist a vertex $v$ in $T$ whose closed neighborhood $N[v]$ contains all the branch vertices of $T$. 
Choose (if not found to be unique) one such special vertex $v$. If $v$ has degree two then consider the path $uvw$ induced by the closed neighborhood of $v$ and call it the extended red path of $T$. If $v$ does not have degree two, then 
the extended red path of $T$ is the singleton vertex $v$. In any case, rename the vertices of the extended red path 
$A = a_1a_2...a_k$  so that we can speak about it in an uniform framework along with the case $T$ having red edges.
We fix such an extended red path $A= a_1a_2...a_k$ for the rest of this article. 
The vertices $V_A = \{a_1, a_2, ..., a_k\}$ of this extended red path $A$ are called the \textit{red vertices}.

\begin{lemma}\label{lm_no_RUIC}
If $T$ has a 2SUIG representation and does not have any red edge, then the number of branch vertex in $T$ is at most 5.
\end{lemma}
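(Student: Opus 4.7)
The plan is to combine the degree bound supplied by Lemma~\ref{obs_deg_4} with a short tree-distance argument that uses the no-red-edge hypothesis. The entire argument decomposes into three steps.

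First, I would prove that any two branch vertices of $T$ are at distance at most $2$. Assume for contradiction that there are branch vertices $b_1,b_2$ with $d_T(b_1,b_2)\geq 3$, and pick an edge $e$ on the unique $b_1$--$b_2$ path that is incident to neither $b_1$ nor $b_2$; such an edge exists because the path has at least three edges. Deleting $e$ yields two components, and since $e$ is not incident to $b_i$, the vertex $b_i$ still has all of its $\geq 3$ neighbours in its component, and is therefore the centre of a claw there. Hence $e$ is red, contradicting the hypothesis.

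Second, I would deduce that every branch vertex of $T$ lies in the closed neighbourhood $N[v]$ of a single vertex $v$. This is in fact the statement claimed in the paragraph just before the lemma, so I would justify it by a tiny case analysis on the maximum pairwise distance among branch vertices. If there are at most two branch vertices, set $v$ to be one of them. Otherwise, choose two branch vertices $b_1,b_2$ realising the maximum pairwise distance, which is $1$ or $2$ by the previous step. In the distance-$1$ case, a third branch vertex would have to be adjacent to both $b_1$ and $b_2$, creating a triangle, impossible in a tree; so only two branch vertices exist, both in $N[b_1]$. In the distance-$2$ case, let $v$ be the unique internal vertex of the $b_1$--$b_2$ path, and observe that any further branch vertex $b_3$ must lie in $N[v]$: otherwise the unique $T$-paths from $b_3$ to $b_1$ and to $b_2$ would sum to length at least $4$, forcing one of $d(b_3,b_1),d(b_3,b_2)$ to exceed $2$.

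Finally, I would conclude by counting: Lemma~\ref{obs_deg_4} gives $\deg_T(v)\leq 4$, so $|N[v]|\leq 5$, and since every branch vertex of $T$ lies in $N[v]$, the number of branch vertices is at most $5$. The only place where a little care is needed is the second step, i.e.\ showing that branch vertices which are pairwise at distance $\leq 2$ collapse into the closed neighbourhood of one vertex; but this is routine tree-distance bookkeeping rather than a substantive obstacle. The argument relies entirely on two ingredients already available: the degree bound $4$ for 2SUIG trees and the absence of red edges.
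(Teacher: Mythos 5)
Your proposal is correct and takes essentially the same route as the paper: the paper simply asserts (in the paragraph preceding the lemma, without proof) that branch vertices are pairwise at distance at most $2$ when there are no red edges, hence all lie in some closed neighbourhood $N[v]$, and then invokes the degree bound of Lemma~\ref{obs_deg_4} to get $|N[v]|\leq 5$; you reproduce exactly this argument while actually proving the two intermediate claims. One microscopic slip in your second step: a distance sum of at least $4$ does not by itself force one of $d(b_3,b_1),d(b_3,b_2)$ to exceed $2$ (it could be $2+2$), but in that equality case $b_3$ would be adjacent to $v$ and hence in $N[v]$ anyway, so your conclusion stands.
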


The above follows directly from the fact that the vertices of $T$ has degree at most four. 

\begin{lemma}\label{lm_1branch}
If $T$ has  at most one branch vertex with degree at most four, then $T$ has a 2SUIG representation.
\end{lemma}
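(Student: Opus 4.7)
The hypothesis splits into two cases. If $T$ has no branch vertex, then $T$ is a path, and the stretched lower-right monotone representation (e.g.\ $s_{v_i}=[i-1,i]\times[1,2]$) places every square on the lower stab and realises exactly the required adjacencies, so $T$ is a 2SUIG. Otherwise $T$ has a unique branch vertex $v$ with $\deg(v)\in\{3,4\}$, so $T$ is a spider consisting of $d=\deg(v)$ pendant paths $Q_1,\ldots,Q_d$ meeting at $v$; the work is to exhibit an explicit 2SUIG representation for every such spider.

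The plan for the spider case is to place $s_v:=[0,1]\times[1,2]$ (so $s_v$ intersects the lower stab) and assign each $Q_i$ to one of four ``quadrant roles'' around $s_v$: upper-left, upper-right, lower-left, lower-right, leaving one role unused when $d=3$. For the upper-right branch $Q=u_1u_2\cdots u_m$, I would take
\[
s_{u_j}\;:=\;[j,\,j+1]\times[1+\epsilon,\,2+\epsilon],\qquad j=1,\ldots,m,
\]
i.e.\ a stretched upper-right monotone path whose first square meets $s_v$ along the segment $\{1\}\times[1+\epsilon,2]$. The three remaining roles are obtained by mirroring in $x$ (replace $[j,j+1]$ by $[-j,-j+1]$) and/or dropping to the lower stab (replace $[1+\epsilon,2+\epsilon]$ by $[0,1]$).

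What remains is a routine case check that this representation realises exactly the edges of $T$: consecutive squares within a single branch share a vertical edge (intersection); non-consecutive squares within a branch are $x$-separated by more than $1$ (non-intersection); squares from two branches on the same stab are $x$-separated by $s_v$ (non-intersection); squares from two branches on different stabs have disjoint $y$-intervals $[0,1]$ and $[1+\epsilon,2+\epsilon]$ (non-intersection, since $\epsilon>0$); and each branch's first square meets $s_v$ exactly at its designated edge or corner, with no further square of $T$ meeting $s_v$ (because $[j,j+1]\cap[0,1]=\varnothing$ for $j\geq 1$).

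The only genuine obstacle is ensuring that branches assigned to opposite stabs do not accidentally intersect one another, which is precisely what the constant $\epsilon$-gap between the two stab lines buys us: it forces the $y$-intervals $[0,1]$ and $[1+\epsilon,2+\epsilon]$ to be disjoint, so any pair of squares from different stabs is automatically separated regardless of their $x$-coordinates. Every remaining check is a direct comparison of unit intervals.
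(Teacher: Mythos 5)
Your proof is correct and follows essentially the same route as the paper: both reduce to the observation that $T$ is either a path or a spider (a subdivision of $K_{1,3}$ or $K_{1,4}$) and then exhibit a 2SUIG representation, the paper merely asserting that the spider case is clear where you supply explicit coordinates. The only nit is the off-by-one in your final parenthetical: $[j,j+1]\cap[0,1]=\varnothing$ holds for $j\geq 2$, not $j\geq 1$ (for $j=1$ the intersection is the point $\{1\}$, which is precisely why the first square of each branch does meet $s_v$).
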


\begin{proof}
If $T$ has no branch vertices, then $T$ is a path which admits a unit interval representation.
On the other hand, $T$ with one branch vertices of degree at most four is a subdivision of $K_{1,3}$ or $K_{1,4}$.
These graphs clearly admit 2SUIG representation.
\end{proof}

The next result will provide  intuition about how a tree having a 2SUIG representation looks like. 

\begin{lemma}\label{lem all branch}
A branch vertex of a tree $T$ is either a red vertex or is adjacent to a red vertex. 
\end{lemma}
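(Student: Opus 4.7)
The plan is to split the proof into two cases according to whether $T$ has any red edge. When $T$ has no red edges, the conclusion is immediate from the construction of the extended red path preceding the statement: there is a designated vertex $v$ whose closed neighborhood $N[v]$ contains every branch vertex of $T$. If $\deg_T(v)=2$ then $V_A = N[v]$ and every branch vertex already lies in $V_A$; if $\deg_T(v)\neq 2$ then $V_A=\{v\}$ and $V_A \cup N(V_A) = N[v]$, so every branch vertex is either $v$ or a neighbor of $v$.

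The substantive case is when $T$ has at least one red edge. Lemma~\ref{lm_RUIC} then guarantees that the red edges of $T$ induce a path $P$ with $V(P)\subseteq V_A$, and the set of red edges of $T$ is exactly $E(P)$. I would argue by contradiction: suppose there is a branch vertex $u$ with $d_T(u,V_A)\geq 2$, and let $u=w_0,w_1,\ldots,w_t$ be the unique $u$-to-$V_A$ path in $T$, where $w_t\in V_A$ and $t\geq 2$, so in particular $w_1\notin V_A$. The entire strategy is to show that the edge $f=w_1w_2$ is red, which contradicts Lemma~\ref{lm_RUIC} because $w_1\notin V_A\supseteq V(P)$ forces $f\notin E(P)$.

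Removing $f$ splits $T$ into two components: $C'$ containing $w_1$ (and therefore $u$, since $uw_1\neq f$) and $C''$ containing $w_2$ (and therefore all of $V_A$). Every neighbor $v_i$ of $u$ in $T$ reaches $w_1$ via $v_i\to u\to w_1$ without using $f$, so all neighbors of $u$ lie in $C'$ and $\deg_{C'}(u)=\deg_T(u)\geq 3$, giving a claw in $C'$ at $u$. For $C''$, I would pick any red edge $e=a_ja_{j+1}$ of $P$, which exists by assumption. Redness of $e$ forces both components of $T-e$ to contain a branch vertex. Since $u$ lies on one of these components, say the $a_j$-side (the other case is symmetric), the $a_{j+1}$-side contains some branch vertex $b^*\neq u$. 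Tracing the $T$-path $w_2\to\cdots\to w_t\to\cdots\to a_j\to a_{j+1}\to\cdots\to b^*$, which uses $e$ but not $f$, places $b^*$ in $C''$; moreover $b^*\neq w_2$ because $w_2$ sits with $u$ on the $a_j$-side of $e$ while $b^*$ sits on the $a_{j+1}$-side. Hence every edge of $T$ at $b^*$ stays in $C''$, giving $\deg_{C''}(b^*)=\deg_T(b^*)\geq 3$ and a claw in $C''$. So $f$ is red, the desired contradiction.

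The main obstacle is the second-component step: after deleting $f$, one must ensure that the branch vertex promised by redness of a chosen $e\in E(P)$ does not degenerate to $w_2$, which would lose a degree in $C''$. The resolution uses the tree structure: $u$'s only access to $V_A$ is the path through $w_1$ and $w_2$, which pins $w_2$ on the same side of every red edge as $u$, while the branch vertex supplied on the opposite side is necessarily separated from $w_2$.
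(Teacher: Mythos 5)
Your proof is correct and takes essentially the same approach as the paper's: both handle the no-red-edge case directly from the construction of the extended red path, and both handle the red-edge case by exhibiting a non-red edge on the path from the offending branch vertex to $V_A$ and showing that both components of its removal contain claws (one centered at $u$, the other supplied by an existing red edge), so that the edge would have to be red --- a contradiction. The only differences are cosmetic: you delete the edge $w_1w_2$ near $u$ while the paper deletes the edge incident to the red vertex, and your verification that the far component retains a claw (via the branch vertex $b^*$ on the opposite side of a chosen red edge) carefully spells out what the paper dismisses as ``clearly.''
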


\begin{proof}
If $T$ has no red edges, then there exists a red vertex $v$ in $T$ such that all the branch vertices are in 
the closed neighborhood $N[v]$  of $v$. 

Thus suppose that  the red edges of  $T$ induces a path. 
Let  a red vertex $v$ and  a non-red branch vertex $u$
be connected by a path with no red edges of length  at least 2. Clearly after deleting this path, the component containing 
$v$  contatins a claw. Thus if we delete the edge $\{e\}$ of the path incident to $v$, then  both the components of 
$T \setminus \{e\}$ contains a claw, a contradiction. 
\end{proof}

Note that if $T$ is a 2SUIG tree with  at least two branch vertices, then the endpoints $a_1$ and $a_k$ of the  extended path $A$ must be branch vertices of $T$. 
Assume that $A_b = \{a_1 = a_{i_1}, a_{i_2}, ..., a_{i_{k'}} = a_k \}$ be the branch vertices of $A$ where $1= i_1 < i_2 < ... < i_{k'} = k$.

The neighbors of  red vertices that are not red are called \textit{agents}.  An agent $v$ is adjacent to exactly one red vertex, 
say $a_j$,
of $T$. We call $v$ is an agent of $a_j$ in this case. 

If we delete all the red vertices and agents from a 2SUIG tree $T$ then by Lemma~\ref{lem all branch} we will be left with some disjoint paths. 
Each such path actually starts from (that is, is one endpoint of) one of the agents. Let $P= v_1v_2...v_l$ be a path where $v_1$ is an agent and the other vertices are neither agent nor red vertices. Also  $v_2, v_3, ..., v_{l-1}$ are degree 2 vertices and $v_l$ is a leaf. 
Then the path $P' = v_2v_3...v_{l-1}v_l$  is called a \textit{tail} of  agent $v_1$. Let $v_1$ be an agent of the red vertex $a_j$. 
Sometimes we will also use the term ``tail $P'$ of the red vertex $a_j$''.  
 Deleting the tail $P'$  is to delete all the vertices of $P'$. 
 The red vertex $a_j$,  all its agents and tails are together called \textit{$a_j$ and its associates} for each $j \in \{1,2,...k\}$. 
  Note that an agent  has exactly two tails by allowing tails with zero vertices. 
 Let us set the following conventions: the tails of an agent $z$  are the \textit{long tail} $lt(z)$ 
 and the \textit{short tail} $st(z)$ such that $|lt(z)| \geq |st(z)|$ where
 $|lt(z)|$ and $|st(z)|$ denotes the number of vertices in the respective tails. 
Now we have enough nomenclatures (see Fig.~\ref{fig nomenclature}) to  present the rest of the proof.

\begin{figure}
\center
\includegraphics[scale=.5]{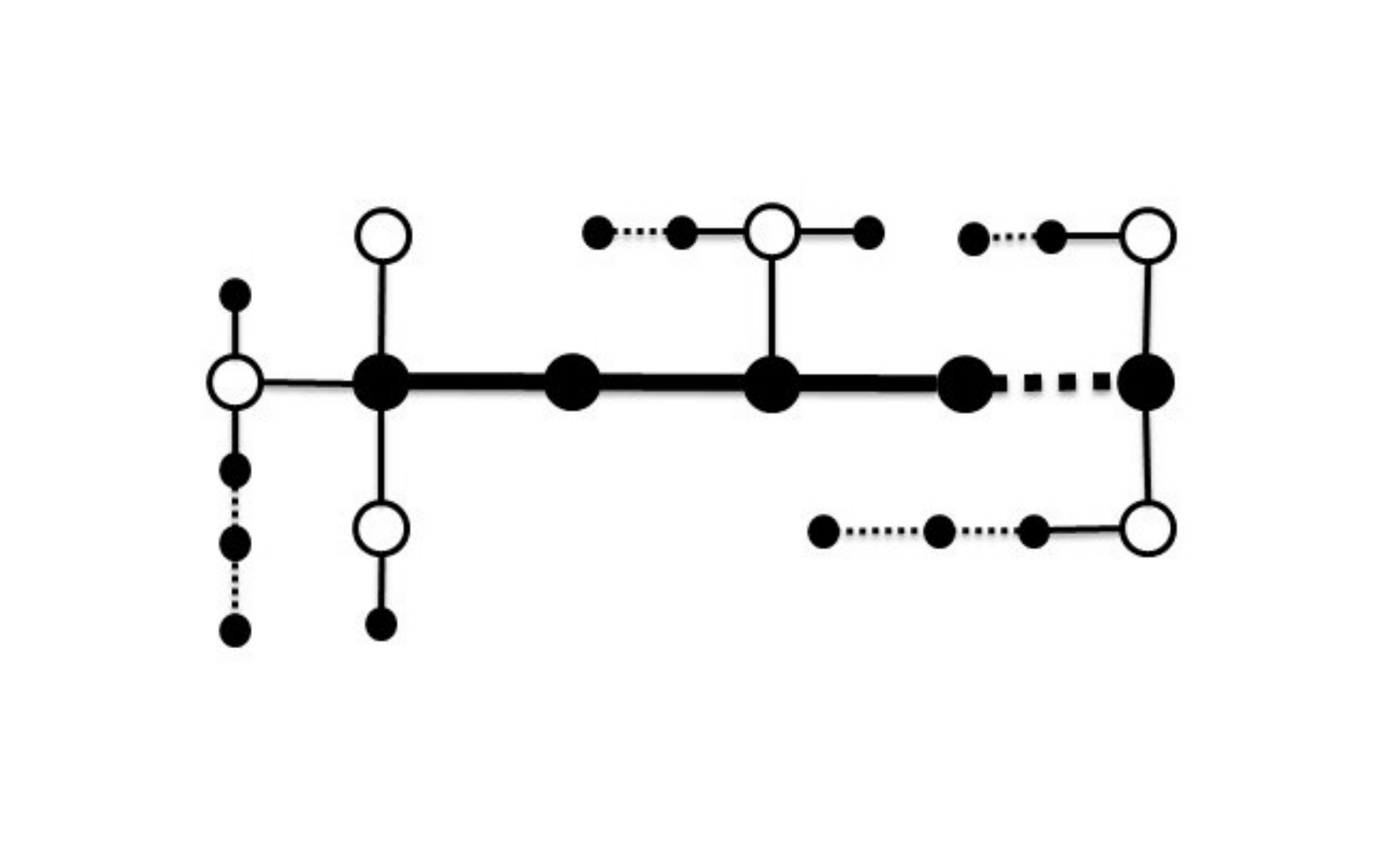}
\caption{The nomenclature -- red edges = thick lines; new edge(s) added to the red path for obtaining the extended red path =  thick dotted line(s);  red vertices = big solid circles, agents = big hollow circles, tails = thin dotted lines; vertices of the tails = small solid circles.}
\label{fig nomenclature}
\end{figure}

Observe that if we delete all the tails of a 2SUIG tree $T$, then we are left with a caterpillar with maximum degree at most 4. 
This is an interesting observation as we know that a tree is an interval graph if and only if it is a caterpillar graph.

\subsection{Partial description of the canonical representation}\label{subsec partial canonical}
Till now we have proved some structural properties of a 2SUIG tree. Now we will discuss about the structure of its 2SUIG representation.  In the following, we will show that there is a canonical way to represent  a 2SUIG tree. 
First we will describe the representation of the extended red path followed by representation of the agents and their tails.

\begin{lemma}\label{lem red monotone}
If $T$ is a 2SUIG with at least one red edge, then there exists  a 2SUIG representation where 
the extended red path of $T$ is monotone. 
\end{lemma}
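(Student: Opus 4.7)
The plan is to begin with an arbitrary 2SUIG representation $R$ of $T$ and modify it by a sequence of local operations until the extended red path $A = a_1 a_2 \ldots a_k$ is traced in $x$-monotone order. The modification at each step is a \emph{flip}: at a turning point of $A$, reverse the left--right orientation of the subtree hanging off that point, without disturbing intersections anywhere else.

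Suppose $A$ is not $x$-monotone in $R$. Then some interior index $i$ with $1 < i < k$ is a turn; without loss of generality $s_{a_{i-1}} <_x s_{a_i}$ and $s_{a_{i+1}} <_x s_{a_i}$ (the opposite case is symmetric). The first step is to invoke the corner-cover argument from the proof of Lemma~\ref{obs_deg_4}: for $s_{a_{i-1}}$ and $s_{a_{i+1}}$ to intersect $s_{a_i}$ from the left, the $x$-coordinates of both their left lower-corners lie in the open interval $(x_{a_i}-1,\, x_{a_i})$. Two such squares on the same stab must then overlap in $x$ and therefore intersect each other, producing the non-tree edge $a_{i-1}a_{i+1}$. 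Hence $s_{a_{i-1}}$ and $s_{a_{i+1}}$ lie on distinct stab lines.

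The second step is the flip. Let $T_R$ be the connected component of $T \setminus \{a_i a_{i+1}\}$ containing $a_{i+1}$. Since $a_i a_{i+1}$ is a cut edge, $T_R$ is attached to $T \setminus T_R$ only through this edge. I reflect the sub-representation of $T_R$ across a vertical line and then translate it rigidly in $x$ so that the image of $s_{a_{i+1}}$ lies just to the right of $s_{a_i}$ and still intersects it, while the rest of $T_R$ is pushed sufficiently far to the right of every square of $T \setminus T_R$. The local arrangement at $a_i$ then reads $s_{a_{i-1}} <_x s_{a_i} <_x s_{a_{i+1}}$, removing the turn. Intersections internal to $T_R$ are preserved because reflection and translation are isometries; no new intersections between $T_R$ and $T \setminus T_R$ appear, by spatial separation; and each square keeps its stab, since reflection across a vertical line does not move $y$-coordinates. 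Iterating the flip through the remaining turns of $A$ from, say, $a_2$ onward produces a 2SUIG representation in which $A$ is $x$-monotone.

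The main obstacle will be justifying the flip in full. One must show that the translated image of $s_{a_{i+1}}$ can always realize the intersection with $s_{a_i}$ from the new $x$-side; this uses the stab dichotomy established in the first step together with the fact that unit squares on either stab can meet a given unit square from either $x$-side. One must also show that the rest of $T_R$ can be spatially separated from $T \setminus T_R$ after translation, which may require first sliding squares of $T \setminus T_R$ that currently occupy the target region out of the way by analogous rigid motions across other cut edges of $T$. Since $T$ is a tree and every edge of $T$ is a cut edge, there is enough freedom to clear the necessary room for each flip, so the iteration terminates with an $x$-monotone representation of $A$.
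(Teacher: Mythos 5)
Your high-level strategy --- remove the turns of $A$ one at a time by a rigid motion of the subtree across the cut edge, after first observing that the two path-neighbours of a turn vertex must lie on different stab lines --- matches the paper's, and your corner-overlap argument for that stab dichotomy is correct. But there is a genuine gap exactly where you flag ``the main obstacle'': you assert that ``there is enough freedom to clear the necessary room for each flip'' because every edge of a tree is a cut edge, and that is not an argument. The assertion is in fact false for an arbitrary path in a 2SUIG tree: if $a_i$ has degree $4$, both right corners of $s_{a_i}$ may already be covered by neighbours whose subtrees are forced to stay where they are, and then no rigid motion of $T_R$ can place the image of $s_{a_{i+1}}$ to the right of $s_{a_i}$ without creating a triangle; moreover, squares of $T_R$ that originally lie to the right of $s_{a_{i+1}}$ are sent by your reflection into the region occupied by $T \setminus T_R$. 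What makes the lemma true is that $A$ is the \emph{red} path --- each side of each red edge contains a claw --- and you never use this hypothesis anywhere.

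The paper's proof uses redness in both of its cases. A folded configuration (a vertex $u$ of $A$ with all of $A$ on one $x$-side of $s_u$) is not repaired but shown to be impossible: the red edges at $u$ put a claw in each of two components, both claws must occupy both stab lines on the same side of $s_u$, and this closes a cycle, contradicting acyclicity. For the remaining turn type, the claw guaranteed in $T_{i+1}$ to the right of $s_{a_{i+1}}$ forces any agent of $a_i$ sitting to the right of $s_{a_i}$ to be a non-branch vertex, so the only obstruction in the target region is a single tail that can be rerouted; after that a plain translation of $T_{i+1}$ (no reflection is needed in this lemma) straightens the turn. To repair your proof you would have to supply this kind of claw-based analysis of what can occupy the region into which you intend to move $T_R$; as written, the feasibility of each flip, and hence the termination of your iteration, rests on an unproved and in general false claim.
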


\begin{proof}
Let $A$ be the extended red path of $T$ with a representation $R$. 
If $A$ is not monotone then one of the following is true. (i) $A$ is a folded path.
(ii) There are three vertices $\{a_{i},a_{i+1},a_{i+2}\}\in V(A)$ such that $s_{a_{i+1}} <_x s_{a_i}$ and $s_{a_{i+1}} <_x s_{a_{i+2}}$ where $i>1$ 
and for all $j<i$ we have $s_{a_j} <_x s_{a_{j+1}}$.

\bigskip

 (i)  There is a vertex $u\in V(A)$ with $s_v <_x s_u$ for all $v \in V(A) \setminus \{u\}$. 
Then there will be two claws $C_1,C_2$ in two different components of $T \setminus \{u\}$ with $s_v <_x s_u$ for all $v \in V(C_1) \cup V(C_2)$. 
But as $T$ is a 2SUIG, this configuration will force a cycle in it. This is a contradiction. 

\medskip

(ii) Without loss of generality assume $s_{a_i}$ intersects the lower stab line.
Then $s_{a_{i+1}}$ and $s_{a_{i+2}}$ must intersect the upper stab line. 
Let $T_{i+1}$ be the component of $T$ obtained by deleting  the edge $a_{i}a_{i+1}$ and contains $a_{i+1}$. 
There is a claw $C_3$ in $T_{i+1}$ with  $s_{a_{i+1}} <_x s_w$ for all $w \in V(C_3)$.
Thus, an agent $z$ of  $a_{i}$ with $s_{a_i} <_x s_z$ is not a branch vertex as otherwise this will force a cycle. 
Hence its tail can be presented by a lower-right monotone representation.
Therefore, we can translate (rigid motion) the component $T_{i+1}$ to the right 
to obtain a 2SUIG representation of $T$ where the extended red path $A$ is monotone.
\end{proof}

Similarly it can be shown that if $T$ is a 2SUIG with no red edge, still it admits a representation 
where 
the extended red path $A$ is monotone. The proof can be argued in a way similar  
to the proof of Lemma~\ref{lem red monotone}. 

Now we will show that it is possible to streach the monotone extended  red path without much problem. The following result is also applicable for trees without red edges. 

\begin{lemma}\label{lem red stretched}
If $T$ admits a 2SUIG representation $R$ with a monotone extended red path, then there exists a 2SUIG representation where $A$ is stretched.
\end{lemma}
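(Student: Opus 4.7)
The plan is to modify $R$ by iteratively right-shifting portions of the representation. Process $i = 2, 3, \ldots, k$ in increasing order. At step $i$, let $T_i$ denote the connected component of $T \setminus \{a_{i-1}\}$ that contains $a_i$, and rigidly translate all squares corresponding to vertices of $T_i$ to the right by the precise amount needed so that $s_{a_{i-1}}$ and $s_{a_i}$ attain the rightmost position that still realizes the edge $a_{i-1}a_i$ (a single unit to the right if $a_{i-1}$ and $a_i$ share a stab; the analogous rightmost position when the edge is a bridge edge). Since each shift preserves the monotone order $s_{a_1} <_x s_{a_2} <_x \ldots <_x s_{a_k}$, after $k-1$ steps we obtain $span(A) = k$, i.e., $A$ is stretched.

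It remains to verify that each shift yields a valid 2SUIG representation of $T$. All edges and non-edges internal to $T_i$ are preserved trivially because rigid translation does not alter relative positions, and the edge $a_{i-1}a_i$ is preserved by the defining choice of shift amount. The only nontrivial point is that no new edge is introduced between a vertex of $T_i$ and a vertex of $T \setminus V(T_i)$: such an edge would not be in $T$, since $a_{i-1}a_i$ is the unique edge of $T$ crossing that partition, and would therefore contradict the new configuration being an intersection representation of $T$.

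I would prove this no-new-edge property by a continuous-motion argument. Parameterize the shift of $T_i$ by $\delta \in [0, \delta_i]$, where $\delta_i$ is the target shift amount. Suppose, for contradiction, that some critical $\delta^* \in (0, \delta_i]$ first produces an intersection $s_u \cap s_v \neq \emptyset$ with $u \in V(T_i)$ and $v \notin V(T_i)$. Then $v$ must be $a_{i-1}$ itself or an associate of some $a_j$ with $j \leq i-1$. Using the structural facts already established---that every non-red branch vertex is adjacent to a red vertex (Lemma~\ref{lem all branch}), that the extended red path is monotone in the current representation, and that in the original $R$ the squares of $T_i$ lie weakly rightward of $s_{a_{i-1}}$ while associates of $a_{j}$ with $j < i-1$ are confined near their host red vertex---I would show $v$ cannot be positioned rightward enough to be reached by a square of $T_i$ under a shift of magnitude at most $\delta_i$.

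The genuine obstacle is the possibility that a tail of an agent of $a_{i-1}$ extends rightward past $s_{a_i}$ and therefore sits inside the region that $T_i$ sweeps through during the shift. To handle this I would preprocess the representation by reflecting any such rightward-extending tail of $a_{i-1}$'s agents into a left-monotone configuration; this is legal because a tail is a pendant path attached only to its agent, so an agent-local reflection does not disturb any other part of $R$. After this preprocessing the no-new-intersection argument of the previous paragraph applies, and the right-shift carries the red path into its stretched configuration.
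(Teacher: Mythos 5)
Your core construction is the same as the paper's: the published proof simply takes each edge $a_ia_{i+1}$ of $A$, splits $T$ at that edge, and rigidly translates the component containing $a_{i+1}$ to the right until $x_{a_{i+1}} = x_{a_i}+1$, repeating over all edges of $A$; your step-$i$ shift of the component of $T\setminus\{a_{i-1}\}$ containing $a_i$ is the identical operation (in a tree the two components coincide as vertex sets). Where you go further is in trying to verify that each shift introduces no new intersections --- a point the paper's three-sentence proof passes over entirely --- and that extra care is welcome, but the one load-bearing claim you add to make it work is not justified as stated. You assert that a rightward-extending tail of an agent of $a_{i-1}$ can be reflected into a left-monotone position, and that this is ``legal because a tail is a pendant path attached only to its agent.'' Pendancy only guarantees that no edge of $T$ is destroyed; it does not guarantee that the reflected tail avoids the squares already occupying the region to the left of that agent (the other tail of the same agent, other agents of $a_{i-1}$, earlier red vertices and their associates). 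Deciding exactly when such a left/right flip of a tail is feasible is the content of the paper's Lemma~\ref{lem one right monotone} and of the case analysis in the canonical-representation section, so it cannot be taken for free inside the proof of Lemma~\ref{lem red stretched}. If you want to keep your continuous-motion argument, the cleaner route is to show directly that the offending configuration --- a square of the shifted part lying within horizontal distance less than $1$ to the left of a square of the fixed part with overlapping $y$-projections --- cannot occur in a valid representation with a monotone red path (note the shift amount at each step is strictly less than $1$), rather than to repair it by a reflection whose validity is itself a nontrivial lemma.
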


\begin{proof}
Let $e = a_ia_{i+1}$ be an edge of the extended red path $A$ with $x_{a_i} < x_{a_{i+1}}$ in $R$. Let $T_i$ and $T_{i+1}$ be the components of $T \setminus \{e\}$ containing $a_i$ and $a_{i+1}$, respectively. Now translate (rigid motion) the component $T_{i+1}$ to the right obtaining a 
2SUIG representation with $x_{a_{i+1}} = x_{a_i} +1$. We are done by performing this operation on every edge of $A$. 
\end{proof}

We turn our focus on the bridge edges of the extended red path.

\begin{lemma}\label{lem red bridge-branch}
If $T$ admits a 2SUIG representation $R$ with a stretched monotone extended red path,  
then there exists a 2SUIG representation where every red bridge vertex is a branch vertex.
\end{lemma}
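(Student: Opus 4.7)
The plan is to show that $R$ can be modified so that no non-branch red vertex is incident to a bridge edge, which is the contrapositive of the claim. The key structural observation is that a non-branch red vertex $a_i$ has degree exactly two in $T$: it lies strictly inside the extended red path with neighbors $a_{i-1}, a_{i+1}$, and any further neighbor would force it to be a branch vertex. Hence $a_i$ has no agents, and being a red bridge vertex simply means one of the two edges $a_{i-1}a_i, a_ia_{i+1}$ is a bridge edge, i.e.\ $s_{a_{i-1}}, s_{a_i}, s_{a_{i+1}}$ do not all lie on the same stab line.

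The plan is therefore to ensure that within every maximal run of consecutive red vertices bounded by branch vertices (or endpoints of $A$), all the squares lie on a common stab line. Two facts about the stretched monotone representation will be used. First, consecutive red squares sit at $x$-distance exactly one so their left-right edges coincide; any pair of them therefore intersects as long as their $y$-intervals meet, which makes it possible to keep an edge between two consecutive red vertices intact when one of them is moved to the other stab, by choosing its $y$-coordinate appropriately. Second, a non-branch red vertex has no agents, so moving its square vertically between the two stabs affects only the two path-edges incident to it, not any other edge of $T$.

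The main obstacle is the case in which two branch vertices $a_j$ and $a_{j+m+1}$ bounding a run of $m \ge 1$ non-branch red vertices lie, in $R$, on different stab lines; any stab assignment of the intermediate vertices then leaves at least one bridge edge incident to a non-branch red vertex. To handle this I would flip the stab of one of the two bounding branch vertices together with all of its associates: the agents of a branch vertex and their tails interact geometrically only with that branch vertex and with each other, so the entire local configuration can be reflected about a horizontal axis to move it to the opposite stab. A case analysis based on Lemma~\ref{obs_deg_4} (each branch vertex has degree at most four, and its at most four neighbors occupy the four corners of its square) is needed to verify that the reflection remains compatible with the rest of $R$ and does not introduce new intersections; this is the most delicate part of the argument, and care must be taken to process the flips in an order (say, left to right along $A$) so that earlier flips are not undone by later ones.

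Finally, I would scan the extended red path from left to right, aligning each branch vertex to the stab of the previous one whenever at least one non-branch red vertex lies between them (flipping its associates as needed) and then placing each intermediate non-branch vertex on that common stab. The resulting representation is a 2SUIG representation of $T$ in which every bridge edge on the red path has both endpoints being branch vertices, which is the statement of the lemma.
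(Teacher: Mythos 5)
Your reduction to runs of non-branch red vertices between consecutive branch vertices is reasonable, and you correctly isolate the hard case (bounding branch vertices on different stabs), but both of the local moves you rely on have genuine problems. First, the claim that moving a non-branch red vertex vertically to the other stab ``affects only the two path-edges incident to it'' is false for a geometric intersection representation: the relocated square can newly intersect any square overlapping its $x$-range on the target stab (for instance, a tail of an agent of a nearby branch vertex running along that stab), creating an edge, hence a cycle, not present in $T$. Second, and more seriously, the local flip of a branch vertex together with its associates need not be realizable. The two red-path neighbors of that branch vertex are \emph{not} flipped and occupy two corners of its square; after your reflection the agents must land on the stab already occupied by one of those fixed neighbors in the same $x$-range. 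Concretely, suppose the right bounding branch vertex $a_{j+m+1}$ is an upper vertex whose lower-left corner is covered by the fixed lower square $s_{a_{j+m}}$ (the path is stretched, so these squares sit at $x$-distance exactly $1$) and whose upper-left corner is covered by an agent. Flipping $a_{j+m+1}$ and that agent to the lower stab puts the agent's square into the $x$-range of $s_{a_{j+m}}$ on the same stab, producing a $3$-cycle. You flag this verification as ``the most delicate part,'' but it is precisely where the argument breaks, and no left-to-right processing order repairs it because the obstruction is geometric rather than an ordering issue.

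The paper's proof avoids both problems by making the modification global rather than local: at a bad bridge edge $a_ia_{i+1}$ it takes the sub-representation induced by all vertices $v$ with $s_{a_i} <_x s_v$, reflects it about the $X$-axis, and translates it back between the stab lines (with a symmetric prefix version for the case where $a_i$ is the non-branch endpoint). Since this is a rigid motion of an entire suffix of the representation, every intersection and non-intersection among the moved squares is preserved verbatim, and the only thing left to check is the bounded interaction across the threshold $x$-coordinate, which the paper handles by noting that $a_i$ can have at most one agent on that side whose position can be adjusted. To salvage your local approach you would have to prove that the flipped branch vertex's associates can always be re-embedded on the opposite stab without new collisions, which in effect forces you back to the suffix-reflection argument.
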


\begin{proof}
Let $A = a_1a_2...a_k$ be a right monotone extended red path of $T$ with respect to $R$. 
Let $e = a_ia_{i+1}$ be a bridge edge of the extended red path $A$ where $a_{i+1}$ is not a branch vertex. 
Also assume that a bridge vertex $a_j$ is a branch vertex for all $j < i$.

Let $S = \{v \in V(T) | s_{a_i} <_x s_v \}$. Let $T'$ be the graph induced by $S$. Note that $S$  contains the vertex $a_{i+1}$. 
Now consider the reflextion of the 2SUIG representation of $T'$ induced by $R$ with respect to the $X$-axis. 
This will give us a 
picture where every unit square corresponding to the vertices of $T'$ lies under the $X$-axis. This is a particular unit square representation $R'$ of 
$T'$ which in fact is also a 2SUIG representation  of it if we consider the stab lines to be $y= -1$ and $y = -2- \epsilon$. 
In this 2SUIG representation $R'$, the lower vertices with respect to $R$ of $T'$ became upper vertices and vice versa. 
Now translate the unit square representation $R'$ upwards until all the upper vertices (with respect to $R'$) 
of $T'$ intersects $y= 2+ \epsilon$, all the lower vertices intersect $y=1$. 
Note that $a_i$ can have at most one degree 2 agent in $T'$ and thus,   that agent can have at most one tail. 
After what we did above, we can adjust the $Y$-co-ordinates of that agent and its tail, if needed, to obtain a 2SUIG representation of $T$.

We will be done by induction after handling one more case. The case where $a_i$ is the first bridge vertex of $A$ which is not a branch vertex while $a_{i+1}$ is a branch vertex.  Let $S' = \{v \in V(T) | s_v <_x s_{a_{i+1}} \}$ and let $T''$ be the graph induced by $S'$.
 To achieve our goal, we do the exact same thing with $T''$ that we did with $T$. This will provide us a 2SUIG representation of $T$  
 where each bridge vertex $a_j$ is a branch vertex for all $j \leq i+1$.
 Hence we are done by induction.
\end{proof}

\begin{figure}
\center
\includegraphics[scale=.4]{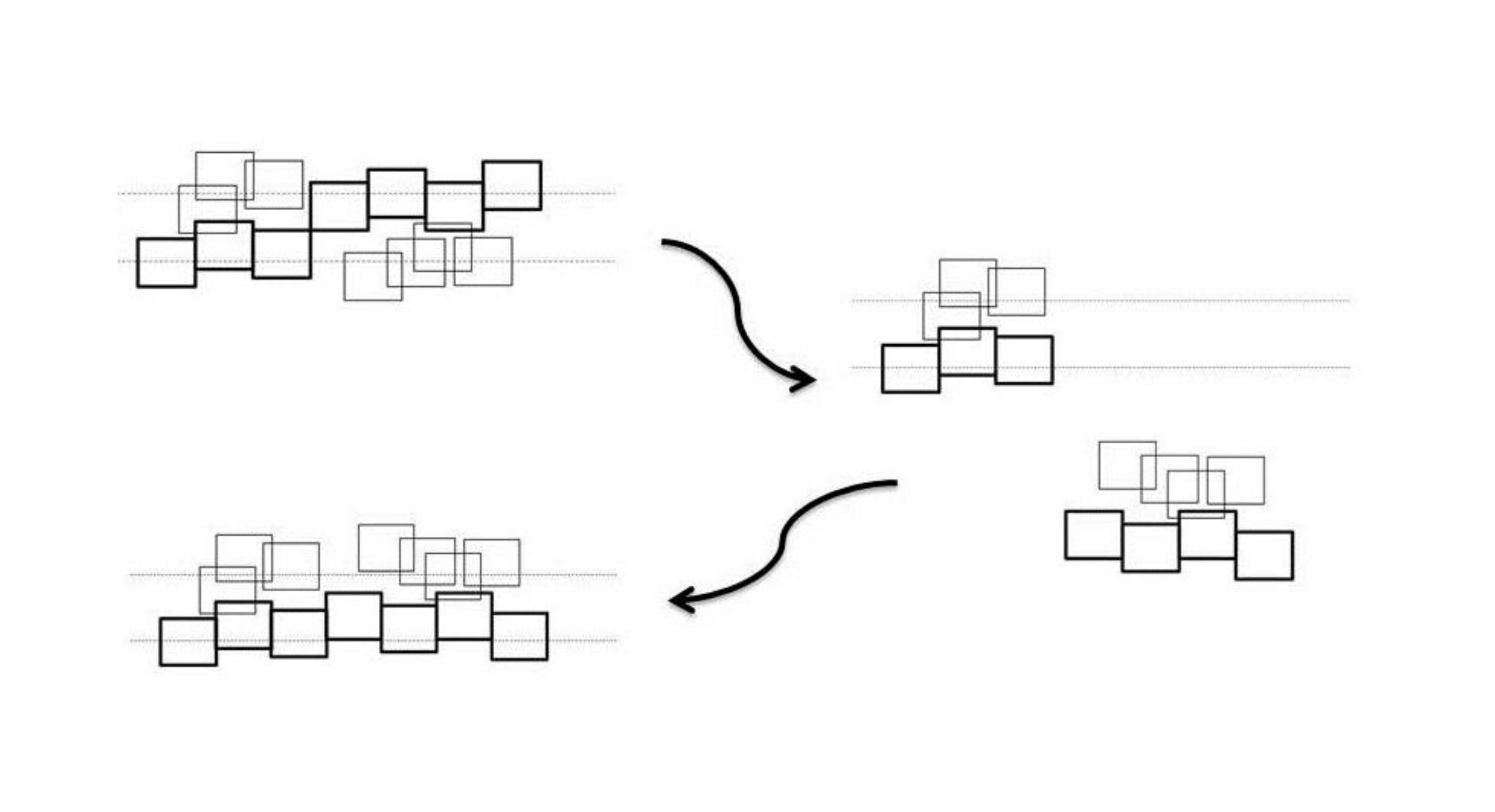}
\caption{Explaining the reflection and translation described in the proof of Lemma~\ref{lem red bridge-branch} with an example.}
\label{fig reflection-translation}
\end{figure}

After proving that  the extended red path is stretched and monotone we will prove the opposite for  tails in the following sense:
  
\begin{lemma}\label{lem tail shrink}
If $T$ admits a 2SUIG representation $R$,  
then there exists a 2SUIG representation where each tail is a shrinked monotone path and all its vertices
are in the same stab. 
\end{lemma}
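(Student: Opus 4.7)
The plan is to start with the given 2SUIG representation $R$ of $T$ and modify each tail of $T$ independently, leaving all squares that do not belong to the tail untouched. For a single tail $P' = v_2 v_3 \ldots v_l$ of an agent $v_1$, I would perform three successive modifications in order: first align every vertex of $P'$ to a common stab, then make $P'$ monotone, and finally shrink it. The crucial structural fact I would lean on throughout is that every internal vertex of $P'$ has degree exactly two and its neighbors lie in $P' \cup \{v_1\}$, while $v_l$ is a leaf; consequently, repositioning the squares $s_{v_2}, \ldots, s_{v_l}$ cannot violate any non-tail adjacency or non-adjacency condition as long as the chain of intersections $s_{v_2} \cap s_{v_1}$, $s_{v_{j}} \cap s_{v_{j+1}}$ is preserved and the tail squares remain confined to a region free of other squares of $T$.

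For the single-stab step, suppose some edge $v_i v_{i+1}$ of $P'$ is a bridge edge in $R$. Because $P'$ is a path of degree-two internal vertices and a leaf, the component of $T \setminus \{v_i v_{i+1}\}$ that contains $v_{i+1}$ is precisely $\{v_{i+1}, \ldots, v_l\}$. I would apply the reflect-and-translate construction of Lemma~\ref{lem red bridge-branch} to this component alone: reflect $s_{v_{i+1}}, \ldots, s_{v_l}$ about a horizontal axis through $s_{v_i}$ and then translate the reflected squares so that $s_{v_{i+1}}$ intersects the same stab line as $s_{v_i}$, while preserving both the intersection $s_{v_i} \cap s_{v_{i+1}}$ and all intersections among $s_{v_{i+1}}, \ldots, s_{v_l}$. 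Iterating over the (finitely many) bridge edges of $P'$ lands the entire tail inside one stab.

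Once $P'$ lies in a single stab, the problem reduces to a unit-interval picture of a path. If this interval picture is not monotone, there is a vertex $v_j$ where the horizontal order changes direction, and since $v_{j+1}, \ldots, v_l$ again form the whole component of $T$ on that side of $v_j v_{j+1}$, I can reflect that portion of the tail about a vertical axis through $s_{v_j}$ to eliminate the fold; repeating this makes $P'$ monotone. Finally, to shrink $P'$, I would translate $s_{v_3}, s_{v_4}, \ldots, s_{v_l}$ toward $s_{v_2}$ square by square, reducing the horizontal gap between each $s_{v_{j-1}}$ and $s_{v_j}$ until the span equals $\lceil (l-1)/2 \rceil + c$. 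Applying the above procedure independently to each tail of $T$ (tails are vertex-disjoint, so the order of processing is irrelevant) produces the desired representation.

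The main obstacle is certifying that the reflect-translate step for the single-stab phase and the compression step for the shrinking phase never push a tail square into a square of $T \setminus P'$. For reflections this is handled by the same observation used in Lemma~\ref{lem red bridge-branch}, namely that the reflected squares occupy the mirror image of a region that was previously unoccupied across the stab line by the rest of $T$. For the compression phase, the argument is that the shrinked monotone representation of $P'$ lives in a horizontal strip of vertical extent one centered on the chosen stab line and is contained, by monotonicity and by moving toward $s_{v_1}$, inside the horizontal span the tail already occupied in $R$; any non-tail square lying strictly between two consecutive tail squares in that strip would force an adjacency absent from $T$, so no such obstruction can exist. Verifying these two conflict-freeness claims carefully is the only nontrivial work in the proof.
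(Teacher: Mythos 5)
Your overall plan is the paper's: everything turns on the observation that a tail has no branch vertices, so the component of $T$ hanging past any bridge edge of the tail is just the remainder of the tail, hence the region beyond that bridge contains no square of $V(T)\setminus V(P')$; one then relocates the tail into a single stab and shrinks it. Your single-stab mechanism, however, contains a step that can fail. Reflecting $s_{v_{i+1}},\ldots,s_{v_l}$ about a horizontal axis and translating vertically preserves all $x$-coordinates. If the tail doubles back in $R$ --- say $s_{v_2},\ldots,s_{v_i}$ run rightwards along the lower stab, the bridge $v_iv_{i+1}$ goes up, and $s_{v_{i+1}},\ldots,s_{v_l}$ run leftwards along the upper stab directly above the prefix (a valid 2SUIG representation permits this, since squares meeting different stab lines with overlapping $x$-projections need not intersect if their $y$-coordinates are chosen suitably) --- then the reflected suffix lands on the lower stab in the same $x$-range as $s_{v_2},\ldots,s_{v_i}$. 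Two squares meeting the same stab line with overlapping $x$-projections necessarily intersect, so this creates edges absent from $T$. Your stated safeguard, that the mirror region is unoccupied by the rest of $T$, is false exactly here: it is occupied by the earlier part of the same tail.

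The paper sidesteps this by not reflecting: it argues that the region beyond the bridge is empty of all non-tail squares and then re-places the tail's vertices there as a fresh single-stab layout, discarding their old positions, so self-overlap of the old layout is irrelevant. (It also records that once all squares of a path meet one stab line the path is automatically monotone, so your separate monotonization-by-vertical-reflection step is vacuous.) Your shrinking step and the conflict-freeness argument for it are at the same level of detail as the paper's, which is equally terse on that point; the genuine gap is only in the reflect-and-translate phase.
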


\begin{proof}
Let $P = v_2 v_3 ...v_{l-1}$ be a tail of agent $v_1$. Note that if all vertices of the tail $P$ are  in the same stab then $P$ must be monotone. 
Furthermore, if $P$ is not shrinked in $R$ then we can shrink it to obtain a new representation of $T$ without changing anything else 
of $R$. 

Therefore, to complete the proof, let us assume that not all vertices of $P$ are in the same stab. Then at least one edge $e$ of $P$ is a bridge edge. Any brige edge divides the stab lines into two parts, left and right. Assume, without loss of generality, that  $s_{v_1}$ is in the left 
part. Thus, as there are no branch vertices in the tail, we do not have any vertex $w \in V(T) \setminus V(P)$ with $s_w$ lying in the right part.
Thus, we can modify the representation $R$ by placing all the vertices of the tail $P$ in the same stab making use of the empty right part.   
\end{proof}

\subsection{Properties of the canonical representation}\label{subsec properties}
According to the preceding discussions, we know that if $T$ is a 2SUIG, then it admits a representation where its 
extended red path is a  monotone stretched path. Without loss of generality, assume that $R$ is such 
a 2SUIG representation of $T$. 
The other vertices of $T$ are the agents and the vertices of the tails. 
   Note that the endpoints $a_1$ and $a_k$ can have at most 3 agents and 6 tails while the other red vertices can have at most 2 agents and 4 tails.

  \begin{lemma}\label{lem one right monotone}
Let   $T$ be a 2SUIG tree with a representation  $R$ where the extended red path $A$ is a stretched  right monotone path and $R$  satisfies the conditions of 
Lemma~\ref{lem red bridge-branch} and \ref{lem tail shrink}. 
\begin{itemize}
\item[$(a)$]  If each right monotone tail of $T$ is such that it is not possible to make the tail left monotone and obtain a 2SUIG 
representation of $T$ from $R$ without changing anything else, then any red vertex other than $a_k$ has at most one right monotone tail 
having at least two vertices.

\item[$(b)$]  If each left monotone tail of $T$ is such that it is not possible to make the tail right monotone and obtain a 2SUIG 
representation of $T$ from $R$ without changing anything else, then any red vertex other than $a_1$ has at most one left monotone tail 
having at least two vertices.
\end{itemize} 
\end{lemma}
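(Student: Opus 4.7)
The plan is to prove (a) by contrapositive; statement (b) will then follow by reflecting the entire representation across the $Y$-axis. So I would assume that $a_i$ with $i<k$ carries two distinct right-monotone tails $T_1, T_2$, each with at least two vertices, and exhibit a right-monotone tail in $R$ that can be replaced by a left-monotone one without disturbing anything else.

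First I would observe that $T_1$ and $T_2$ must come from two distinct agents $z_1, z_2$ of $a_i$. Indeed, by Lemma~\ref{lem tail shrink} both tails of any agent $z$ lie entirely in the same stab as $z$; if both extended to the right of $z$, their first vertices would have $x$-coordinates in $(x_z, x_z+1]$ in this stab and both would intersect $s_z$, hence also each other, contradicting that they belong to vertex-disjoint components of $T\setminus\{z\}$.

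Next I would analyze where $z_1$ and $z_2$ can sit relative to $a_i$. A key sub-claim is that no agent of $a_i$ lying in the same stab as $a_i$ can carry a right-monotone tail of length $\ge 2$: if $z$ is the left same-stab neighbor of $a_i$, then its tail's first vertex $v_2$ has $x_{v_2}\in(x_z, x_z+1]$ and shares a stab with $a_i$, so $s_{v_2}$ meets $s_{a_i}$, giving a triangle $a_iz v_2 a_i$; if $z$ is the right same-stab neighbor and $a_{i+1}$ shares this stab, the analogous triangle through $s_{a_{i+1}}$ arises; in the remaining case (same stab as $a_i$ but $a_{i+1}$ on the opposite stab), combining the four-corner argument behind Lemma~\ref{obs_deg_4} with the stretched positioning $x_{a_{i+1}}=x_{a_i}+1$ forces the tail's first vertex to meet an associate of $a_{i+1}$. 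Hence both $z_1$ and $z_2$ lie in the stab opposite to $a_i$, one on each side of $s_{a_i}$.

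With this in hand, let $z_1$ be the agent on the left of $s_{a_i}$ in the opposite stab. Its shrinked right-monotone tail of length $\ell-1\ge 2$ currently occupies a horizontal interval of length $\lceil(\ell-1)/2\rceil+c<\ell-1$ to the right of $z_1$ in its stab. I would replace it by the mirror tail placed to the left of $z_1$ in the same stab. By the stretched structure of the red path (Lemmas~\ref{lem red stretched} and~\ref{lem red bridge-branch}), the segment of length at least $1$ immediately to the left of $z_1$ inside this stab is free of associates of $a_{i-1}$ or of $a_i$, so the shrinked tail fits there. Since only the tail's own squares move, no existing intersection is destroyed and no new one is created, yielding a 2SUIG representation that agrees with $R$ outside the flipped tail, contradicting the hypothesis. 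The main obstacle is the case analysis of the preceding paragraph, particularly the sub-case in which $a_{i+1}$ lies in the opposite stab, together with the verification that the left region of $z_1$ is genuinely free of obstructing squares; both reduce to a careful accounting of which associates can enter each stab, using Lemmas~\ref{lem tail shrink} and~\ref{lem red bridge-branch} to confine them.
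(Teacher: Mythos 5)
Your plan is structurally different from the paper's proof --- the paper argues directly, by cases on which corner of $s_{a_j}$ each agent occupies, that $s_{a_{j-1}}$ and $s_{a_{j+1}}$ (stretched and monotone) block all but one long right-monotone tail, and it invokes the flipping hypothesis only once, for the lower-left agent when $j=1$. Your contrapositive-plus-flip version has gaps I do not see how to close. The first is a misreading of Lemma~\ref{lem tail shrink}: that lemma puts all vertices \emph{of a tail} in one stab, not in the same stab as the agent; the agent--tail edge may be a bridge edge. This undercuts your step that the two tails must come from distinct agents (an agent can have a lower-right tail and an upper-right tail, which never meet each other), and it undercuts your sub-claim that a same-stab agent of $a_i$ carries no right-monotone tail on at least two vertices: the left same-stab agent can carry an \emph{upper}-right monotone tail (this is exactly why the paper needs the flipping hypothesis in its $j=1$ case), and the right same-stab agent --- which occurs precisely when $s_{a_{i+1}}$ takes the opposite-stab right corner of $s_{a_i}$ --- can carry a long right-monotone tail in $a_i$'s own stab with nothing to its right forcing a collision. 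So the reduction to ``both $z_1,z_2$ in the opposite stab, one on each side'' is not established, and is in fact false.

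The second gap is the flip itself. The region to the left of $z_1$ in its stab need not be free: it can be occupied by $s_{a_{i-1}}$ or by agents and tails of $a_{i-1}$, and a shrinked tail on $\ell\ge 2$ vertices needs span $\lceil \ell/2\rceil + c > 1$ in any case. More tellingly, your flip argument never uses the existence of the \emph{second} long tail, so if it were sound it would show that \emph{every} long right-monotone tail of an opposite-stab left agent is flippable --- contradicting the second case of the paper's proof, where a single such non-flippable tail is explicitly a legitimate configuration (it is the one long right-monotone tail the lemma allows). Finally, the case $i=1$, where there is no $a_{i-1}$ and where the paper genuinely needs the flipping hypothesis, does not appear in your case analysis at all. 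To repair the argument you would essentially have to reproduce the paper's corner-by-corner analysis.
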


\begin{proof}
$(a)$    Let $a_j$ ($j < k$) be a red vertex of $T$ with at least two monotone tails. Without loss of generality, assume that 
   $a_j$ is a lower vertex. Note that $s_{a_{j+1}}$ contains either the upper-right corner or the lower-right corner of $s_{a_j}$.
      Let $v$ be an agent of $a_j$. 
   
   \begin{itemize}
   \item[Case $1.$] If $s_v$ contains the lower-left corner of  $s_{a_j}$, then $v$ cannot have a right monotone tail
   for any $j \geq 2$ as $s_{a_{j-1}}$ contains the upper-left corner of $s_{a_j}$. If  $j=1$, then a right monotone tail 
   $P = v_1v_2...v_l$ of $v$ 
   must be upper-right monotone. This means $vv_1$ is a bridge edge. This will mean, there is no upper vertex $z$ in $T$ 
   with $s_z <_x s_{v_1}$. Thus, if we make the tail $P$ an upper-left monotone tail instead (we keep the position of $s_{v_1}$ as before but change the positions of the other vertices of $P$), then we obtain a 2SUIG representation of $T$. But this should not be possible according to the assumptions. Hence $v$ cannot have a right-monotone tail for any red vertex $a_j$ ($j \neq k$).

   \item[Case $2.$]   If $s_v$ contains the upper-left corner of  $s_{a_j}$, then $v$ can indeed have an upper-right monotone tail. 
   Note that, in this case, $v$ can have at most one right monotone path, an upper-right monotone that is.  
   If some other neighbor of $a_j$ contains the upper-right corner of $s_{a_j}$, then the right monotone tail of $v$ can have at most one vertex in order to avid cycles in $T$.

   \item[Case $3.$]   If $s_v$ contains the upper-right corner of  $s_{a_j}$, then $v$ can have at most one right monotone tail,
   an upper-right monotone tail to be specific,  as this situation implies that $s_{a_{j+1}}$ contains the lower-right corner of $s_{a_j}$. 
   \end{itemize}
   
   Therefore, only the agents containing upper-left corner or upper-right corner of $s_{a_j}$ 
   can have at most one right monotone tail each. But if both types of agents are present, then 
   the right monotone tail of the agent containing upper-left corner of $s_{a_j}$ can have at most one vertex.     
   
\medskip

$(b)$ This proof can be done similarly like $(a)$.    
  \end{proof}
   
   From  the above  we can infer the following:

    \begin{lemma}\label{lem one-one}
Let   $T$ be a 2SUIG tree with a representation  $R$ where 
a lower (or upper) red vertex has two upper (or lower) neighbors. 
\begin{itemize}
\item[$1.$]   If the left neighbor is an agent then it can have at most one right monotone tail having at most one vertex.

\item[$2.$]   If the right neighbor is an agent then it can have at most one left monotone tail having at most one vertex.

\item[$3.$]   If both the neighbors are agents then the left neighbor can have a right monotone tail and the right neighbor 
can have a left monotone tail having one vertex each.
\end{itemize}
\end{lemma}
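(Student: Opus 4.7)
The plan is to read off each of the three conclusions directly from the case analysis inside the proof of Lemma~\ref{lem one right monotone}, using only the positional constraints that are forced the moment a red vertex has two neighbors on the opposite stab. Concretely, assume (by the symmetry noted in the statement) that the red vertex $a_j$ sits on the lower stab line and carries two upper neighbors $v_L$ and $v_R$ with $s_{v_L} <_x s_{v_R}$. Since both $s_{v_L}$ and $s_{v_R}$ are unit squares in the upper stab that must intersect the lower-stab square $s_{a_j}$, the only corners of $s_{a_j}$ available to them are the two upper ones; hence $s_{v_L}$ covers the upper-left corner and $s_{v_R}$ covers the upper-right corner of $s_{a_j}$. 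I would invoke Lemma~\ref{lem tail shrink} in advance so that every tail of $v_L$ and $v_R$ may be assumed to live entirely in a single stab and to be monotone, which is precisely the regime in which Lemma~\ref{lem one right monotone} operates.

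For part (1), I would take $v_L$ to be an agent and appeal to Case 2 of the proof of Lemma~\ref{lem one right monotone}: since $s_{v_L}$ contains the upper-left corner of $s_{a_j}$, its only possible right-monotone tail is an upper-right monotone path. But $s_{v_R}$ is already parked on the upper-right corner of $s_{a_j}$, so a second vertex of such a tail would be forced to intersect $s_{v_R}$, creating an edge from $v_R$ to that tail vertex and hence a cycle through $a_j$, contradicting that $T$ is a tree. Therefore the tail has at most one vertex. Part (2) is the mirror argument: the only left-monotone tail available to $v_R$ is upper-left monotone, and $s_{v_L}$ sits on the upper-left corner of $s_{a_j}$, so the identical cycle obstruction kicks in and caps the tail at one vertex.

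For part (3) I would simply combine (1) and (2): when both $v_L$ and $v_R$ are agents, the arguments above are independent (one controls the right-monotone tail of $v_L$, the other the left-monotone tail of $v_R$), and Lemma~\ref{lem one right monotone}(a),(b) already guarantees that each agent carries at most one monotone tail in the relevant direction, so both one-vertex bounds apply simultaneously. The main obstacle I anticipate is a bookkeeping one: making sure that the cycle-forcing step really uses only the fact that the two upper corners of $s_{a_j}$ are occupied, and that one cannot evade the bound by rerouting a tail into the opposite stab or bending it into a folded path. Lemma~\ref{lem tail shrink} closes the former loophole and the monotonicity hypothesis inherited from Lemma~\ref{lem one right monotone} closes the latter, so after these preparations the argument reduces to the single-paragraph corner-counting sketched above.
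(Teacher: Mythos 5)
Your proposal matches the paper exactly: the paper gives no separate proof of this lemma, stating only that it is inferred ``from the above,'' i.e.\ from the case analysis in the proof of Lemma~\ref{lem one right monotone} (in particular Case~2, where an agent on the upper-left corner with another neighbour on the upper-right corner is limited to a one-vertex right-monotone tail to avoid cycles), which is precisely the corner-occupation and cycle argument you spell out. Your only slight imprecision is asserting that a second tail vertex must intersect $s_{v_R}$ specifically --- depending on coordinates it may instead be forced back onto $s_{v_L}$ --- but a cycle arises either way, so the conclusion stands.
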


In the above lemma, the third case is to say that the worst case scenarios of the first two cases can take place simultaniously. 
Now we will discuss the length of the tails that can be accommodated between two red branch vertices.

  \begin{lemma}\label{lem 4options}
Let   $T$ be a 2SUIG tree with a representation  $R$ with a stretched right monotone 
extended red path $A$. Let $a_l$, $a_{l+m}$ ($m=0$ is possible) be two red branch vertices such that 
$a_l$ has a right monotone tail $P$ and $a_{l+m}$ has a left monotone tail $P'$ in the same stab with no  vertex $v$ 
of $T$  satisfying $s_{p} <_x s_v <_x s_{p'}$ where $p, p'$ are the leaf vertices of the tails $P, P'$, respectively.  Then, depending on the positions of the corresponding agents $v$  of $P$ and  $v'$ of $P'$, 
 $R$ must satisfy one of the following conditions:

\begin{itemize}
\item [1.] If $s_{a_l}  <_x s_{v}$ and $s_{v'} <_x s_{a_{l+m}}$, then 
$  \alpha(P_v) + \alpha(P'_{v'})   \leq    m$;
\item [2.] If $s_{v} <_x s_{a_l} $ and $s_{v'} <_x s_{a_{l+m}}$, then
$  \alpha(P_v) + \alpha(P'_{v'}) -1 \leq m$;
\item [3.] If $s_{v} <_x s_{a_l} $ and $s_{a_{l+m}} <_x s_{v'}$, then 
$  \alpha(P_v) + \alpha(P'_{v'}) -2 \leq m$; 
\item [4.] If $s_{a_l}  <_x s_{v}$ and $s_{a_{l+m}} <_x s_{v'}$, then
$  \alpha(P_v) + \alpha(P'_{v'}) -1 \leq m$;
\end{itemize}

\noindent where $P_v$ is the path induced by $V(P) \cup \{v\}$ and $P'_{v'}$ is the path induced by $V(P') \cup \{v'\}$. 
\end{lemma}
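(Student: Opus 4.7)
The plan is to extract a single length-counting identity from the canonical form of $R$ and then specialise it to each of the four configurations. Using Lemmas~\ref{lem red monotone}, \ref{lem red stretched} and \ref{lem tail shrink}, I may assume that $R$ is stretched right monotone on $A$ -- so $x_{a_{l+m}}=x_{a_l}+m$ -- while $P$ and $P'$ are shrinked monotone paths lying in a common stab, giving $span(P)=\alpha(P)+c$ and $span(P')=\alpha(P')+c$. The adjacencies $v\sim a_l$, $v\sim v_2$, $v'\sim a_{l+m}$, $v'\sim v'_2$ confine the agents to $x_v\in[x_{a_l}-1,x_{a_l}+1]$ and $x_{v'}\in[x_{a_{l+m}}-1,x_{a_{l+m}}+1]$; a short computation with the shrinked spacing (together with the non-adjacencies $v\not\sim v_3$, $v'\not\sim v'_3$) then forces $x_v<x_{v_2}$ and $x_{v'_2}<x_{v'}$, so $v$ is the leftmost square of $P_v$ and $v'$ is the rightmost square of $P'_{v'}$.

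The central step is the length identity
\[
 x_{v'}+1-x_v \;=\; span(P_v) + (x_{p'}-x_p-1) + span(P'_{v'}),
\]
which follows immediately from the monotone arrangement above. Since $p,p'$ are leaves of distinct tails they are non-adjacent in $T$, and since $s_p,s_{p'}$ share a common $y$-projection (same stab) non-intersection forces $x_{p'}-x_p>1$: the middle bracket is strictly positive. Applying the inequality $\alpha(Q)<span(Q)$ for any monotone single-stab path $Q$ (recorded in Section~2) to both $P_v$ and $P'_{v'}$ upgrades the identity to
\[
 x_{v'}+1-x_v \;>\; \alpha(P_v)+\alpha(P'_{v'}).
\]

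It remains to plug in each case's range of $x_{v'}-x_v$. In Case~1, $x_v>x_{a_l}$ and $x_{v'}<x_{a_{l+m}}$ give $x_{v'}-x_v<m$, so the sum is strictly below $m+1$ and, by integrality, at most $m$. In Case~2, $x_v\ge x_{a_l}-1$ and $x_{v'}<x_{a_{l+m}}$ give $x_{v'}-x_v<m+1$, hence the sum is at most $m+1$. In Case~3, $x_v\ge x_{a_l}-1$ and $x_{v'}\le x_{a_{l+m}}+1$ give $x_{v'}-x_v\le m+2$, hence the sum is at most $m+2$. Case~4 is symmetric to Case~2 and yields the sum $\le m+1$. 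Rearranging these four bounds produces the four displayed inequalities. The main technical obstacle is the subcase in which $v$ (or $v'$) is placed in the stab opposite to its tail: there $P_v$ is no longer a monotone path in a single stab and the bound $span(P_v)>\alpha(P_v)$ must be re-established using the geometry of inter-stab adjacencies (and the y-window in which two squares from different stabs can intersect); once this is done, the remaining interval arithmetic is routine.
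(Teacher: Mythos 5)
Your proof is correct and takes essentially the same route as the paper's: both compare the horizontal room available (the stretched subpath $a_l\ldots a_{l+m}$ of span $m+1$, plus at most one unit on each side for an agent placed outside it) against the room consumed by the two shrinked paths $P_v$ and $P'_{v'}$, which is the paper's inequality $span(A')+2\geq span(P_v)+span(P'_{v'})$ recast as your length identity. You work all four cases explicitly where the paper only writes out Case~3, and the obstacle you flag at the end (an agent lying in the stab opposite its tail, so that $P_v$ is not a single-stab monotone path) is real but is equally unaddressed in the paper, which simply asserts that $P_v$ and $P'_{v'}$ are shrinked.
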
 

\begin{proof}
Let $A' = a_la_{l+1}...a_{l+m}$ and  $s_{v} <_x s_{a_l} $ and $s_{a_{l+m}} <_x s_{v'}$. As $A'$ is stretched and 
$P_v$ and $P'_{v'}$ are both shrinked we have
$ span(A') + 2 \geq span(P_v) + span(P'_{v'})$. This implies 
$m+1 + 2 = m+3 \geq  \lceil  \alpha(P_v) + \alpha(P'_{v'}) +2c \rceil = \alpha(P_v) + \alpha(P'_{v'}) +1$ and hence condition~$3$. 

The other conditions can be proved similarly. 
\end{proof}

In our prescribed representation of $T$, assuming it is a 2SUIG, bridge edges of the extended red path are induced by red branch vertices. Here we show that if two adjacent red vertices have degree 4 each, then they must be branch vertices. 

\begin{lemma}\label{lem red deg4bridge}
If two adjacent red vertices both have degree 4, then they must be in different stabs. 
\end{lemma}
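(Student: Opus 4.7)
The plan is to argue by contradiction. Suppose two adjacent red vertices $a_j$ and $a_{j+1}$, both of degree $4$, lie in the same stab in some $2$SUIG representation; without loss of generality both are lower. Write $s_{a_j}=[x_a,x_a+1]\times[y_a,y_a+1]$ and $s_{a_{j+1}}=[x_b,x_b+1]\times[y_b,y_b+1]$ with $y_a,y_b\in[0,1]$, and (exchanging the labels $a_j$ and $a_{j+1}$ if necessary) assume $x_a\le x_b$. The goal is to produce a vertex that is a common neighbour of both $a_j$ and $a_{j+1}$ in the intersection graph, yielding a triangle in $T$ and the required contradiction.

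The first step is a corner-counting principle underlying Lemma~\ref{obs_deg_4}: since $T$ is triangle-free, distinct neighbours of a vertex $u$ must cover distinct corners of $s_u$, so $\deg(u)\le 4$, and equality forces a bijection between the four neighbours and the four corners of $s_u$. Applied to $a_j$, this rules out the configurations in which $s_{a_{j+1}}$ would cover two corners of $s_{a_j}$: the case $y_a=y_b$ forces both right corners of $s_{a_j}$ to be covered by $s_{a_{j+1}}$, while the case $x_a=x_b$ with $y_a\neq y_b$ forces both top (or both bottom) corners to be covered. Hence we may assume $x_a<x_b$ and $y_a\neq y_b$.

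Consider first the subcase $y_a<y_b$. A direct inequality check shows $s_{a_{j+1}}$ covers exactly the upper-right corner of $s_{a_j}$, so by the bijection there is a neighbour $v\neq a_{j+1}$ of $a_j$ whose square contains the lower-right corner $(x_a+1,y_a)$. From $s_v\ni(x_a+1,y_a)$ one reads $x_v\in[x_a,x_a+1]$ and $y_v\in[\max(0,y_a-1),y_a]$; in particular $y_v\le y_a\le 1<1+\epsilon$, so $v$ is forced onto the lower stab. I then verify $s_v\cap s_{a_{j+1}}\neq\emptyset$: the $x$-overlap follows from $x_v+1\ge x_a+1\ge x_b$ and $x_v\le x_a+1\le x_b+1$, and the $y$-overlap reduces to $y_v\ge y_b-1$, which is automatic since $y_v\ge 0$ and $y_b\le 1$. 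Thus $v\sim a_{j+1}$ in the intersection graph, giving a triangle $v$--$a_j$--$a_{j+1}$ in $T$ and contradicting the tree hypothesis. The subcase $y_a>y_b$ is handled by the mirror argument using the neighbour $w$ of $a_{j+1}$ that covers the lower-left corner $(x_b,y_b)$ of $s_{a_{j+1}}$; the same corner-counting and the inequalities $y_w\ge 0\ge y_a-1$, $y_w\le y_b\le 1$ force $w$ onto the lower stab and force $s_w\cap s_{a_j}\neq\emptyset$.

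I expect the main obstacle to be exactly this forcing of $v$ (or $w$) onto the lower stab: it is precisely because both red vertices share the lower stab (so $y_a,y_b\in[0,1]$) that $y_v$ gets squeezed into $[0,y_a]$, which in turn makes the $y$-overlap with $s_{a_{j+1}}$ automatic from $y_v\ge 0\ge y_b-1$. If $a_j$ and $a_{j+1}$ lay on different stabs this squeeze would disappear and $v$ could be placed high enough on the upper stab to avoid $s_{a_{j+1}}$, which is why the lemma is sharp.
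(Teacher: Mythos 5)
Your proof is correct and follows essentially the same route as the paper: both arguments reduce to the corner-counting principle at a degree-4 vertex and the observation that, when the two squares share a stab, one of them blocks two corners of the other (either directly or because any square at the remaining right/left corner would be squeezed onto the same stab and forced to intersect both, creating a triangle). Your version merely makes explicit, via the coordinate inequalities $0\le y_v\le y_a\le 1$, the step the paper leaves implicit in the claim that ``no neighbor other than $s_{a_{l+1}}$ can contain one of the right corners of $s_{a_l}$.''
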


\begin{proof}
Without loss of generality assume that $a_la_{l+1}$ is such an edge where $a_{l},a_{l+1}$ are 
both degree 4 lower vertices with $s_{a_l} <_x s_{a_{l+1}} $. Then either the upper-right corner of $s_{a_l}$ 
is contained in  $s_{a_{l+1}}$ or the upper-left corner of 
$s_{a_{l+1}}$ is contained in $s_{a_l}$. If the upper-right corner of $s_{a_l}$ is 
contained in  $s_{a_{l+1}}$, then $a_l$ cannot have more than three neighbors as no 
neighbor other than $s_{a_{l+1}}$ can contain one of the right corners of $s_{a_{l}}$. We can argue similarly for the other case as well. 
\end{proof}

\subsection{The canonical representation}
In this section suppose that $T$ is a tree with maximum degree 4 such that either there is no red edge or the 
red edges induces a path.
We will try to obtain a 2SUIG representation of $T$ and if our process fails to obtain such a presentation, then we will conclude that $T$ is not a 2SUIG.  Also assume that the extended red path of $T$ is $A = a_1a_2...a_k$. 
Due to Lemma~\ref{lem red monotone} and~\ref{lem red stretched} we can assume that $A$ is a stretched right monotone path  and $a_1$ is a lower vertex. 

Our strategy is to first represent $a_1$ and its associates and then to represent $a_i$ and its associates one by one in 
accending order of indices where $i \in \{2,3,...,k\}$.
In each step our strategy is to  represent $a_i$ and its associates
 in such a way that the maximum value of $x_v$  is minimized where $v$ is a vertex from  $a_i$ and its associates. 

Note that the main difficulty is to represent $a_i$ and its associates when $d(a_i) \geq 3$ as otherwise $a_i$ do not have any agents or tails. We start off with the following useful lemma.

\begin{lemma}\label{lem a1 same a2}
There exists a 2SUIG representation, satisfying all the properties of the canonical representation proved till now, with $a_1$ and $a_{2}$ in the same stab if and only if 
either $d(a_1) \leq 3$ or $d(a_2) \leq 3$. 
\end{lemma}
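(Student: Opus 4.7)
The plan is to prove each direction of the biconditional separately.

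For the forward (only-if) direction I would argue contrapositively. If $d(a_1)=d(a_2)=4$, then $a_1$ and $a_2$ are adjacent red vertices both of degree $4$, and Lemma~\ref{lem red deg4bridge} forces them into different stabs in every 2SUIG representation. Hence any representation satisfying the canonical properties and placing $a_1$ and $a_2$ in the same stab must have $d(a_1)\le 3$ or $d(a_2)\le 3$.

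For the backward (if) direction, assume without loss of generality (by symmetry between $a_1$ and $a_2$) that $d(a_2)\le 3$, and start from an arbitrary canonical 2SUIG representation $R$ of $T$, which exists by hypothesis together with Lemmas~\ref{lem red monotone}, \ref{lem red stretched}, \ref{lem red bridge-branch}, and \ref{lem tail shrink}. If $a_1$ and $a_2$ already share a stab in $R$ we are done. Otherwise $a_1a_2$ is a bridge edge and, by Lemma~\ref{lem red bridge-branch}, both $a_1$ and $a_2$ are branch vertices; in particular $d(a_2)=3$ and $a_2$ has exactly one agent $w$ apart from its red-path neighbours $a_1$ and $a_3$. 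The key move is to reflect the subtree $T_2$ (the component of $T\setminus\{a_1a_2\}$ containing $a_2$) about the horizontal midline $y=(3+\epsilon)/2$ between the two stab lines. Because this reflection is a rigid motion, all $x$-coordinates are unchanged, all adjacencies inside $T_2$ are preserved, the upper/lower stabs are swapped throughout $T_2$, and every shrinked/stretched monotone path inside $T_2$ remains shrinked/stretched monotone. In the new picture $a_2$ lies in the same stab as $a_1$; since $x_{a_2}=x_{a_1}+1$ by the stretched property of $A$ and both $s_{a_1}$ and the reflected $s_{a_2}$ now have $y$-intervals in $[0,1]$, the edge $a_1a_2$ is still realised along the vertical line $x=x_{a_1}+1$. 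The canonical properties proved so far survive automatically: stretched monotonicity of $A$ depends only on $x$-coordinates (untouched), the shrinked single-stab condition on tails is preserved inside $T_2$ by rigid motion, and the bridge-equals-branch property can only change at the edge $a_1a_2$, which has simply ceased to be a bridge.

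The main obstacle, and the place where the degree hypothesis $d(a_2)\le 3$ is really used, is showing that the reflection does not create any spurious edges between $T_1:=T\setminus T_2$ and the flipped $T_2$. Squares of $T_1$ have $x$-range inside $(-\infty,x_{a_1}+1]$ while squares of $T_2$ lie in $[x_{a_1}+1,\infty)$, so the only potential conflict is along the line $x=x_{a_1}+1$, where right-agents of $a_1$ could now overlap with $s_{a_2}$ or with $s_w$ in their new positions. This is precisely where the bound $d(a_2)\le 3$ pays off: $s_{a_2}$ has at least one of its four corners left unclaimed by $\{a_1,a_3,w\}$, which gives enough slack in the choice of $y_{a_2}$, $y_w$ and (if needed) $y_{a_3}$ after the flip to route the reflected cluster around any right-agent of $a_1$; the admissible configurations are exactly those enumerated in Lemma~\ref{lem one-one} applied at $a_2$. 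Had $d(a_2)=4$ all four corners of $s_{a_2}$ would be occupied and no such rerouting would be possible, which is consistent with Lemma~\ref{lem red deg4bridge} and with the forward direction.
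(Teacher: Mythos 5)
Your proposal is correct and follows essentially the same route as the paper: the paper's proof consists of exactly two sentences, deriving the ``only if'' direction from Lemma~\ref{lem red deg4bridge} and the ``if'' direction from the reflection-and-translation technique of Lemma~\ref{lem red bridge-branch}, which is precisely what you do (your reflection about the midline $y=(3+\epsilon)/2$ is the paper's reflection about the $X$-axis composed with the subsequent vertical translation). You have simply spelled out the details the paper leaves implicit, including the role of the spare corner of $s_{a_2}$ that the paper compresses into ``we can adjust the $Y$-co-ordinates of that agent and its tail, if needed.''
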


\begin{proof}
The ``only if'' part follows from Lemma~\ref{lem red deg4bridge}. The ``if'' part can be proved similarly like the proof of 
Lemma~\ref{lem red bridge-branch}. 
\end{proof}

The above result completely determines when $a_2$ will be in the lower stab and when it will be in the upper stab.

\subsubsection{Representation of $a_1$ and its associates when $k \neq 1$}
First we will handle the case $k \neq 1$.    Now we are going to list out the way to obtain the canonical representation  of $a_1$ and its associates and the conditions  for it to be valid through case analysis. 
 Also in any representation the agents intersecting the lower-left corner, the upper-left corner, the upper-right corner and 
the lower-right corner   of $s_{a_1}$ are renamed as  $z_1, z_2, z_3$ and $z_4$, respectively. The conditions below are simple conditions for avoiding cycles in the graph.

 \begin{itemize}
 \item[Case 1:] $d(a_1) = 4, d(a_2) = 4$.   In this case $a_2$ is an upper vertex  by Lemma~\ref{lem red deg4bridge}, $s_{a_2}$ intersects the upper-right corner of $s_{a_1}$  
 and the three agents  of $a_1$ are $z_1, z_2, z_4$. 
\begin{itemize}
\item[(1)]  $lt(z_1)$ is shrinked
lower-left monotone and $st(z_1)$ is shrinked
upper-left monotone.  

\item[(2)]  $lt(z_2)$ is shrinked
upper-left monotone and $st(z_2)$ is shrinked
upper-right monotone.

\item[(3)]  $|st(z_2)| \leq 1$ and if  $|st(z_1)| > 0$, then $|lt(z_2)| \leq 1$.

\item[(4)] $|lt(z_4)| = 0$. 
\end{itemize}

\item[Case 2:] $d(a_1) = 3, d(a_2) = 4$.   In this case $a_2$ is a lower vertex  by Lemma~\ref{lem a1 same a2}, $s_{a_2}$ intersects the upper-right corner of $s_{a_1}$  
 and the two agents of $a_1$ are $z_1, z_2$. 
\begin{itemize}
\item[(1)]  conditions (1)--(3) of Case~1.
\end{itemize}

\item[Case 3:] $d(a_1) = 4, d(a_2) = 3$.   In this case $a_2$ is a lower vertex  by Lemma~\ref{lem a1 same a2}, $s_{a_2}$ intersects the lower-right corner of $s_{a_1}$  
 and the three agents of $a_1$ are $z_1, z_2, z_3$. 
\begin{itemize}
\item[(1)]  condition (1)--(3) from Case~1.

\item[(2)]  $|lt(z_3)| \leq 1$, $lt(z_3)$ is shrinked
upper-left monotone and $st(z_3)$ is shrinked
upper-right monotone. 
\end{itemize}

\item[Case 4:] $d(a_1) = 3, d(a_2) = 3$.   In this case $a_2$ is a lower vertex  by Lemma~\ref{lem a1 same a2}, $s_{a_2}$ intersects the lower-right corner of $s_{a_1}$  
 and the two agents of $a_1$ are $z_1, z_2$. 
\begin{itemize}
\item[(1)]  $lt(z_1)$ is shrinked
lower-left monotone and $st(z_1)$ is shrinked
upper-left monotone.  

\item[(2)]  if $|st(z_1)| > 0$, then $lt(z_2)$ is shrinked
upper-right monotone with $|lt(z_2)| \leq 3$ and $st(z_2)$ is shrinked
upper-left monotone $|st(z_2)| \leq 1$.

\item[(3)]  if $|st(z_1)| = 0$, then $lt(z_2)$ is shrinked
upper-left monotone and $st(z_2)$ is shrinked
upper-right monotone $|st(z_2)| \leq 3$. 
\end{itemize}

\item[Case 5:] $d(a_1) = 4, d(a_2) = 2$.   In this case $a_2$ is a lower vertex  by Lemma~\ref{lem a1 same a2}, $s_{a_2}$ intersects the lower-right corner of $s_{a_1}$  
 and the three agents of $a_1$ are $z_1, z_2, z_3$. 
\begin{itemize}
\item[(1)]  condition (1)--(3) from Case~1.

\item[(2)]  if $|lt(z_3)| \leq 1$, then  $lt(z_3)$ is shrinked
upper-left monotone and $st(z_3)$ is shrinked
upper-right monotone.

\item[(3)]  if $|lt(z_3)| > 1$, then $|st(z_3)| \leq 1$ and $lt(z_3)$ is shrinked
upper-right monotone and $st(z_3)$ is shrinked upper-left monotone.
\end{itemize}
 
\item[Case 6:] $d(a_1) = 3, d(a_2) = 2$.   In this case $a_2$ is a lower vertex  by Lemma~\ref{lem a1 same a2}, $s_{a_2}$ intersects the lower-right corner of $s_{a_1}$  
 and the two agents of $a_1$ are from $\{z_1, z_2, z_3\}$. 
\begin{itemize}
\item[(1)] $lt(z_1)$ is shrinked
lower-left monotone and $st(z_1)$ is shrinked
upper-left monotone.  

\item[(2)]  if both $z_1,z_2$ exists and $|st(z_1)| = 0$,  then 
$lt(z_2)$ is shrinked
upper-left monotone and $st(z_2)$ is shrinked
upper-right monotone. 

\item[(3)] if both $z_1,z_2$ exists,  $|st(z_1)| > 0$ and $|lt(z_2)| \leq 1$,  then 
$lt(z_2)$ is shrinked
upper-left monotone and $st(z_2)$ is shrinked
upper-right monotone.

\item[(4)] if both $z_1,z_2$ exists,  $|st(z_1)| > 0$ and $|lt(z_2)| > 1$,  then 
$st(z_2)$ is shrinked
upper-left monotone and $lt(z_2)$ is shrinked
upper-right monotone with $|st(z_1)| \leq 1$.

\item[(5)] there is no case when both $z_2,z_3$ exists as we can always modify this representation by making the agent playing the role of $z_2$ play the role of $z_1$ instead. 

\item[(6)] there is no case when both $z_1,z_3$ exists with $|st(z_1)| = 0$ as we can always modify this representation by making the agent playing the role of $z_3$ play the role of $z_2$ instead.

\item[(7)] there is no case when both $z_1,z_3$ exists with $|st(z_3)| \leq 1$ and $|lt(z_3)| \leq 3$ as we can always modify this representation by making the agent playing the role of $z_3$ play the role of $z_2$ instead.

\item[(8)]  if both $z_1,z_3$ exists,   $|st(z_1)| > 0$ and $|lt(z_3)| \leq 3$,  then 
$lt(z_3)$ is shrinked
upper-left monotone
and $st(z_2)$ is shrinked
upper-right monotone.

\item[(9)]  if both $z_1,z_3$ exists,   $|st(z_1)| > 0$ and $|lt(z_3)| > 3$,  then 
$st(z_3)$ is shrinked
upper-left monotone with $|st(z_1)| \leq 3$
and $lt(z_2)$ is shrinked
upper-right monotone.
\end{itemize}
\end{itemize}

The square $s_{a_2}$ must intersect one of the right corners of $s_{a_1}$. In each of the cases listed above, there 
can be at most $3! = 6$ possible ways of in which the  agents of $a_1$ can play the role of $z_1,z_2,z_3, z_4$. 
 Among all possible ways those which satisfies the above conditions,  we choose the one for which the 
 leaf of the right-monotone 
 tail of  $z_2$ (only when $z_3, z_4$ does not exist) or $z_3$ ($z_4$ cannot exist) or $z_4$ ($z_3$ cannot exist) is minimized with respect to $<_x$.   
  As there are at most a constant number of probes to be made, this is achieveable in constant time. 
  Moreover,  such a representation, if found, will be called the 
  \textit{optimized representation of $a_1$ and its associates}.  Otherwise, $T$ is not a 2SUIG. 

\subsubsection{Representation of $a_1$ and its associates when $k=1$} 
Now we will handle the case $k=1$. Let $a_1$ be a lower vertex and the agents intersecting the lower-left corner, the upper-left corner, the upper-right corner and 
the lower-right corner   of $s_{a_1}$ are renamed as  $z_1, z_2, z_3$ and $z_4$, respectively.
The agents should follow the conditions listed below. These are simple conditions for avoiding cycles.

\begin{itemize}
\item[(1)]  $P_{11}$ is shrinked
lower-left monotone and $P_{12}$ is shrinked
upper-left monotone where $\{P_{11}, P_{12} \} = \{st(z_1), lt(z_1)\}$.  

\item[(2)]  $P_{21}$ is shrinked
upper-left monotone and $P_{22}$ is shrinked
upper-right monotone   where $\{P_{21}, P_{22} \} = \{st(z_2), lt(z_2)\}$.   

\item[(3)]  $P_{31}$ is shrinked
upper-left monotone and $P_{32}$ is shrinked
upper-right monotone   where $\{P_{31}, P_{32} \} = \{st(z_3), lt(z_3)\}$.    

\item[(4)]  $P_{41}$ is shrinked
lower-right monotone and $P_{42}$ is shrinked
upper-right monotone  where $\{P_{41}, P_{42} \} = \{st(z_4), lt(z_4)\}$.   

\item[(5)]  either $|P_{12}| = 0$ or  $|P_{21}| \leq 1$.

\item[(6)] if $z_2$ does not exist, then  either $|P_{12}| = 0$ or  $|P_{31}| \leq 3$.

\item[(7)] if both $z_2, z_3$ exist, then   $|P_{22}|,|P_{31}| \leq 1$.

\item[(8)]  either $|P_{42}| = 0$ or  $|P_{32}| \leq 1$.

\item[(9)] if $z_3$ does not exist, then  either $|P_{42}| = 0$ or  $|P_{22}| \leq 3$.
\end{itemize}

 As there are at most a constant number of probes to be made, this is achieveable in constant time. 
For this special case too,  such a representation, if found, will be called the 
  \textit{optimized representation of $a_1$ and its associates}.  Otherwise, $T$ is not a 2SUIG.

\subsubsection{Representation of $a_i$ and its associates for all $1 < i < k$}
  Now we will inductively describe the  canonical representation  of $a_i$ and its associates given the canonical representation of $a_j$ and its associates for all $j < i$ and the conditions  for it to be valid through case analysis. 
Note that as the way of having the canonical representation of $a_1$ and its associates is known, 
the conditions listed below is readily applicable for finding the canonical representation of $a_2$ and its associates. 
Furthermore, it is applicable for finding the canonical representation of $a_i$ and its associates by induction for all $i \in \{2,3,..., k-1\}$ as $d(a_{i+1})$ is a parameter that we need to know for finding a representation.

 Throughout the case analysis we will assume without loss of generality that $a_{i-1}$ is an upper vertex.
  Also assume that $a_{i'}$ be the maximum $i' < i$ such that  $d(a_{i'}) \geq 3$ 
 ($i-1 = i'$ is possible). 
Moreover,  in any representation the agents intersecting the lower-left corner,  the upper-right corner and 
the lower-right corner   of $s_{a_i}$ are renamed as  $z_1, z_3$ and $z_4$, respectively. The conditions below are simple conditions for avoiding cycles in the graph.

 \begin{itemize}
\item[(1)] If $z_1$ exists and $a_i$ is a lower vertex, 
then $|st(z_1)| = 0 $ 
and $lt(z_1)$ is a lower-left shrinked monotone path satisfying conditions of Lemma~\ref{lem 4options}.

\item[(2)] If $z_1$ exists, $a_i$ is an upper vertex and either $z_3$ or $z_4$ exists, 
then $st(z_1)$ is a lower-right shrinked monotone path with $|st(z_1)| \leq 1 $ 
and $lt(z_1)$ is a lower-left shrinked monotone path satisfying conditions of Lemma~\ref{lem 4options}.

\item[(3)] If $z_1$ exists, $a_i$ is an upper vertex and neither $z_3$ nor $z_4$ exists, 
$P_{11}$ is shrinked
lower-left monotone satisfying conditions of Lemma~\ref{lem 4options} and $P_{12}$ is shrinked
lower-right monotone where $\{P_{11}, P_{12} \} = \{st(z_1), lt(z_1)\}$.

\item[(4)] If $z_3$ exists with  $a_{i}$ being an  upper vertex, then  
$|st(z_3)| = 0 $ 
and $lt(z_3)$ is an upper-right shrinked monotone path.

\item[(5)] If $z_3$ exists with  $a_{i}$ being a lower vertex, then  
$z_3$ has an upper-left monotone tail $P_1$ satisfying $|P_1| \leq 1$ 
and an upper-right monotone tail $P_2$ with some 
$\{P_1, P_2\} = \{st(z_3), lt(z_3)\}$.

\item[(6)] If $z_4$ exists with  $a_{i}$ being an  upper vertex and $z_1$ also exists, then  
$z_4$ has a lower-left monotone tail $P_1$ satisfying $|P_1| \leq 1$
and a lower-right monotone tail $P_2$ with some 
$\{P_1, P_2\} = \{st(z_4), lt(z_4)\}$.

\item[(7)] If $z_4$ exists with  $a_{i}$ being an  upper vertex and $z_1$ does not exist, then  
$z_4$ has a lower-left monotone tail $P_1$ conditions of Lemma~\ref{lem 4options}
and a lower-right monotone tail $P_2$ with some 
$\{P_1, P_2\} = \{st(z_4), lt(z_4)\}$.

\item[(8)] If $z_4$ exists with  $a_{i}$ being a lower vertex, then  
$|st(z_4)| = 0 $ 
and $lt(z_4)$ is a lower-right shrinked monotone path.  
\end{itemize}

 In each of the cases listed above, there 
can be at most  
$2 \times 2 = 4$ possible ways of in which the  agents of $a_i$ can play the role of $z_1, z_3, z_4$. 
And for each agent playing the role of $z_j$ the tails can play 
the role of $P_1, P_2$ in $2$ different ways for each $j \in \{1, 3, 4\}$. 
Thus there can be at most $4 \times 2^3 = 32$ possible ways in which $a_i$ and its associates can be represented. 
 Among all possible ways those which satisfies the above conditions,  we choose the one for which the 
 leaf of the right-monotone 
 tail of  $z_3$ ($z_4$ cannot exist) or $z_4$ ($z_3$ cannot exist) is minimized with respect to $<_x$.   
  As there are at most a constant number of probes to be made, this is achieveable in constant time. 
  Moreover,  such a representation, if found, will be called the 
  \textit{optimized representation of $a_i$ and its associates}.  Otherwise, $T$ is not a 2SUIG.

\subsubsection{Representation of $a_k$ and its associates}
The canonical representation of $a_k$ and its associates is relatively simpler. Note that $a_k$ can have at most 3 agents and each of them can have at most 2 tails. 
Suppose that in any representation the agents intersecting the lower-left corner,  the upper-right corner and 
the lower-right corner   of $s_{a_k}$ are renamed as  $z_1,  z_3$ and $z_4$, respectively. 
Without loss of generality assume that $a_{k-1}$ is an upper vertex. 
Then our goal will be to find a representation of  $a_k$ and its associates which satisfies the following conditions.

 \begin{itemize}
\item[(1)] If $z_1$ exists with $a_{k}$ being a lower vertex, then 
 $|st(z_1)| = 0 $ 
and $lt(z_1)$ is a lower-left shrinked monotone path satisfying conditions of Lemma~\ref{lem 4options}.

\item[(2)] If $z_1$ exists with $a_{k}$ being an upper vertex and $z_4$ exists, then $|st(z_1)| \leq 1$.
Moreover, if $|lt(z_1)| \leq 1$, then 
$lt(z_1)$ is a lower-right shrinked monotone path  
and $st(z_1)$ is a lower-left shrinked monotone path satisfying conditions of Lemma~\ref{lem 4options}.  
Otherwise,
 $st(z_1)$ is a lower-right shrinked monotone path 
and $lt(z_1)$ is a lower-left shrinked monotone path satisfying conditions of Lemma~\ref{lem 4options}.

\item[(3)] If $z_1$ exists with $a_{k}$ being an upper vertex, $z_4$ does not exist and $z_3$ exists with 
$|st(z_3)| \geq 1$, then $|st(z_1)| \leq 3$.
Moreover, if $|lt(z_1)| \leq 3$, then 
$lt(z_1)$ is a lower-right shrinked monotone path  
and $st(z_1)$ is a lower-left shrinked monotone path satisfying conditions of Lemma~\ref{lem 4options}.  
Otherwise,
 $st(z_1)$ is a lower-right shrinked monotone path 
and $lt(z_1)$ is a lower-left shrinked monotone path satisfying conditions of Lemma~\ref{lem 4options}.

\item[(4)] If $z_1$ exists with $a_{k}$ being an upper vertex, $z_4$ does not exist and $z_3$ exists with 
$|st(z_3)| = 0$,  then
$lt(z_1)$ is a lower-right shrinked monotone path  
and $st(z_1)$ is a lower-left shrinked monotone path satisfying conditions of Lemma~\ref{lem 4options}.

\item[(5)] If $z_3$ exists with  $a_{k}$ being an  upper vertex, then  
$st(z_3)$ is a lower-right shrinked monotone path
and $lt(z_3)$ is an upper-right shrinked monotone path.
Moreover, if $z_4$ exists, then  $|st(z_3)| = 0 $.

 \item[(6)] If $a_{k-1}$ has an upper-right monotone tail $P$ with $|P| \geq 2$ and $a_{k}$ is a lower vertex, then $z_3$
 cannot  exist.

 \item[(7)]  If $z_3$ exists with  $a_{k}$ being a lower vertex and $z_4$ exists with $|lt(z_4)| \geq 1$, then  
$z_3$ has an upper-left monotone tail $P_1$ satisfying $|P_1| \leq 1$
and an upper-right monotone tail $P_2$ with some 
$\{P_1, P_2\} = \{st(z_3), lt(z_3)\}$.

 \item[(8)]  If $z_3$ exists with  $a_{k}$ being a lower vertex and 
 if either $z_4$ exists with $|lt(z_4)| = 0$ or $z_4$ does not exist, then  
$lt(z_3)$ is an upper-right monotone tail and $st(z_3)$ is  a lower-right monotone tail.

\item[(9)] If $z_4$ exists with  $a_{k}$ being an  upper vertex and $z_3$ exists with $|lt(z_3)| \geq 1$, then  
$z_4$ has a lower-left monotone tail $P_1$ satisfying conditions of Lemma~\ref{lem 4options}
and a lower-right monotone tail $P_2$ with some 
$\{P_1, P_2\} = \{st(z_4), lt(z_4)\}$.

\item[(4)] If $z_4$ exists with  $a_{k}$ being an  upper vertex and if either $z_3$ exists with $|lt(z_3)| = 0$ or $z_3$ does not exist, then  
$st(z_4)$ is an upper-right monotone tail 
and $st(z_4)$ is an upper-right monotone tail. 

\item[(5)] If $z_4$ exists with  $a_{k}$ being a lower vertex, then  
$st(z_4)$ is a lower-right shrinked monotone path. 
and $lt(z_4)$ is a lower-right shrinked monotone path.  
Moreover, if either $z_3$ exists or $z_2$ does not exist but $a_{k-1}$ has an upper-right monotone tail $P$ with $|P| \geq 4$, then $|st(z_4)| = 0$.
\end{itemize}

 In each of the cases listed above, there 
can be at most  
$3! = 6$ possible ways of in which the  agents of $a_1$ can play the role of $z_1, z_3, z_4$. 
For each agent playing the role of $z_j$ the tails can be placed in at most two 
 different ways for each $j \in \{1, 3, 4\}$. 
Thus there can be at most $6 \times 2^3 = 48$ possible ways in which $a_i$ and its associates can be represented. 
If we can find one representation  which satisfies the above conditions,  we choose that one 
and call it \textit{optimized representation of $a_k$ and its associates}. 
Otherwise, $T$ is not a 2SUIG. 
 As there are at most a constant number of probes to be made, this is achieveable in constant time. 

\subsection{Algorithm}
Finally we will  describe the algorithm for recongnizing if a given tree $T$ is a 2SUIG. 
Whenever our algorithm concludes that the given tree $T$ is not a 2SUIG, there is a configuration responsible for it. 
These configurations are forbidden configurations for 2SUIG trees.

\begin{itemize}
\item[(1)]  Check if maximum degree of $T$ is at most 4. If not, then $T$ is not a 2SUIG by Lemma~\ref{obs_deg_4}. Otherwise, go to the next step.

\item[(2)] Check if there at most one branch vertex in $T$. If yes, then $T$ is a 2SUIG by Lemma~\ref{lm_1branch}. Otherwise, go to the next step.

\item[(3)]  Find out the graph induced by the red edges of the tree. If that graph has at  least one  edge but not a path, then 
$T$ is not a 2SUIG by Lemma~\ref{lm_RUIC}. Otherwise, go to the next step.

\item[(4)]  Find out a (but for some trivial cases it is unique) extended  red path $A = a_1a_2...a_k$. 
Assign $x_{a_i} = i$ for all $i \in \{1,2, ..., k\}$. Moreover, put $s_{a_1}$ in the lower stab.  

\item[(5)]    For $i = 1 \text{ to } k$ find out the 
optimized representation of $a_i$ and its associates. If we fail to find such a representation for some $i \in \{1, 2, ..., k\}$, then $T$ is not a 2SUIG. 
\end{itemize}

Correctness of the algorithm implies from the previous results and discussions.
Given a tree it is possible to find out its set of red edges in linear time using post-order traversal. 
For the other steps
we need to probe at most a constant number cases  for each red vertex. Thus, it is possible to run the algorithm in 
$O(|V|)$ time.

\section{Conclusion}
In this paper we consider the problem of recognizing 2SUIG trees. While doing that we proved a number of 
structural properties and provided insights regarding how a canonical 2SUIG representation of a tree can be obtained. 
Recall our discussion on red edges and red vertices of a tree. Observe that, if the red vertices induce a path, 
then the tree has a unit square intersection representation.  
Hence, we hope our work can be extended for ``k-stab unit interval graphs'' and will help solving 
the tree recognition problem for cubicity two graphs. Moreover, our graph class is obtained by 
putting edges between two unit interval graphs. From applications point of view, as 
unit interval graphs have wide range of applications, our graph class might be able to capture 
interactions between two unit interval graphs  and prove to be valuable in future.

\bibliographystyle{abbrv}
\bibliography{science}

\end{document}